\newtheorem{lemma}{Lemma}
\newcommand{\bs}{\boldsymbol}
\begin{document}

\title{Cooperative UAVs for Remote Data Collection under Limited Communications: An Asynchronous Multiagent Learning Framework}
\fancyhead{IEEE TWC}

\author{Cuong Le, Symeon Chatzinotas, \IEEEmembership{Fellow, IEEE}, and Thang X. Vu, \IEEEmembership{Senior Member, IEEE} 
\thanks{C. Le is with the School of Computing, National University of Singapore. Email: cuonglv@comp.nus.edu.sg}
\thanks{S. Chatzinotas and T. X. Vu are with the Interdisciplinary Centre for Security, Reliability and Trust (SnT) - University of Luxembourg, L-1855 Luxembourg. E-mail: \{symeon.chatzinotas, thang.vu\}@uni.lu. \emph{Corresponding author}: Thang X. Vu.}

\thanks{This work was supported in whole, or in part, by the Luxembourg National Research Fund, ref. C22/IS/17220888/RUTINE and INTER/23/17941203/PASSIONATE. For the purpose of open access, and in fulfilment of the obligations arising from the grant agreement, the author has applied a Creative Commons Attribution 4.0 International (CC BY 4.0) license to any Author Accepted Manuscript version arising from this submission.
Parts of this work has been presented at the 2024 IEEE GLOBECOM Workshops \cite{cuong24}.}
\vspace{-0.5cm}}

\markboth{Accepted to IEEE Transactions on Wireless Communications}{C. Le \MakeLowercase{\textit{et al.}}: Cooperative UAVs for Remote Data Collection: An Asynchronous MARL Framework}



\maketitle

\begin{abstract}
    This paper addresses the joint optimization of trajectories and bandwidth allocation for multiple Unmanned Aerial Vehicles (UAVs) to enhance energy efficiency in the cooperative data collection problem. We focus on an important yet underestimated aspect of the system, where action synchronization across all UAVs is impossible. Since most existing learning-based solutions are not designed to learn in this asynchronous environment, we formulate the trajectory planning problem as a Decentralized Partially Observable Semi-Markov Decision Process and introduce an asynchronous multi-agent learning algorithm to learn UAVs' cooperative policies. Once the UAVs' trajectory policies are learned, the bandwidth allocation can be optimally solved based on local observations at each collection point. Comprehensive empirical results demonstrate the superiority of the proposed method over other learning-based and heuristic baselines in terms of both energy efficiency and mission completion time. Additionally, the learned policies exhibit robustness under varying environmental conditions.
\end{abstract}

\begin{IEEEkeywords}
Asynchronous multiagent systems, cooperative multiagent systems, data collection.
\end{IEEEkeywords}

\section{Introduction}
The rapid advancement of communication technologies like 5G and the anticipated rollout of 6G networks have paved the way for efficient and scalable deployment of Unmanned Aerial Vehicles (UAVs) in real-life applications. Sectors such as agriculture, environmental monitoring, and disaster management increasingly rely on UAVs to collect data quickly and reliably, thanks to their accessibility and highly probable line-of-sight links to ground terminals. For example, UAVs can be used to regularly collect sensory information of the crop conditions in smart agriculture, or monitoring information of windmills in a windfarm.
However, distributed autonomous decision-making and efficient cooperation under limited onboard energy remain critical challenges that must be addressed to maximize the potential of UAVs and ensure their sustainable use.

As future network generations promise unparalleled capabilities, including AI-driven automation, multiagent reinforcement learning (MARL) has emerged as a key approach to addressing the above challenges. With its ability to adapt to dynamic environments and enable autonomous collaboration, MARL is increasingly applied to a wide range of UAV cooperative planning problems \cite{bayerlein2021multi, oubbati2022synchronizing, li2022learning, chen2022joint, wang2023cooperative}. However, most existing MARL-based solutions assume that all UAVs are well-synchronized, in the sense that they all make decisions at the same time. While this assumption simplifies algorithm design, it is difficult to achieve in real-world environments for several reasons: $i$) different UAVs naturally require varying amounts of time to complete their actions; $ii$) forcing UAVs to wait for one another wastes time and system resources; $iii$) achieving synchronization across all UAVs requires extensive control signaling, either among agents or between agents and an intermediary medium; and $iv$) even with an effective synchronization mechanism, unexpected signaling delays or hardware issues can still disrupt synchronization.  Moreover, timing mismatches can directly undermine the performance of policies learned under synchronous assumptions. For instance, UAVs may be required to meet periodically to exchange information or confirm task completion. If their actions are not perfectly aligned, they may fail to meet as planned, leading to overlapping exploration, inefficient waiting cycles, or, in the worst case, a complete breakdown of the learned policy. Addressing asynchronous decision-making is therefore crucial for extending MARL solutions beyond laboratory simulations and making them viable in physical environments. Despite its significance, to the best of our knowledge, asynchronous decision-making in wireless communications, specifically in multi-UAV planning, has not yet been examined.

In this study, we consider the joint optimization problem of UAVs' trajectories and bandwidth allocation at each data collection point to maximize the overall energy efficiency. Our work stands out from existing studies by focusing on an overlooked aspect of the system, where UAVs have to make decisions independently, without time synchronization with one another. This is motivated by the fact that different collection points have varying amounts of data and different transmission channel qualities. As a result, the time required for data collection at each of these points naturally varies. Additionally, since UAVs operate in remote areas with limited inter-UAV communication, synchronizing actions across all UAVs at every step is impossible. The contributions of this paper are as follows:
\begin{itemize}
    \item We consider the UAV-assisted data collection problem under practical conditions: $i$) the location of available data points and the data size are random and unknown in advance; and $ii$) different collection points have varying amounts of data and transmission channel qualities, resulting in different time requirements at each point. Under these conditions, offline optimization approaches are infeasible, and online solutions must equip the UAVs with the capability to make decisions asynchronously.
    \item We introduce a learning-based solution, where the problem is decomposed into two subproblems: trajectory optimization and bandwidth allocation at each hovering point. While bandwidth allocation can be solved independently by each UAV at each hovering point using any convex optimizer, trajectory optimization is more challenging as it requires UAVs to collaborate without explicit time synchronization. To tackle this challenge, we transform it into a Decentralized Partially Observable Semi-Markov Decision Process (Dec-POSMDP), which allows for varying action durations. We then propose Asynchronous-QMIX, a MARL algorithm specifically designed for \textit{ asynchronous environments}. 
    Our algorithm is based on QMIX algorithm \cite{rashid2020monotonic} - one of the state-of-the-art MARL algorithms for \textit{synchronous environments}.
    We theoretically show that under our asynchronous setting, local deterministic greedy policies remain applicable for UAVs, as long as a monotonic relationship between the global Q-value and local utility values holds. Additionally, we also introduce a state downsampling technique to reduce the state space and enhance the scalability of the proposed algorithm.
    \item Extensive experiments are conducted to validate the effectiveness of the proposed method. The numerical results show that our solution outperforms existing learning-based and heuristic solutions. Moreover, the policies learned by our algorithm also demonstrated robust performance under varying environmental conditions.
\end{itemize}

The rest of this paper is organized as follows. Section~\ref{sec:related_work} provides a brief review of existing studies and their limitations. Section~\ref{sec:system} introduces the system model and problem formulation. Section~\ref{sec:mdp} presents the formulation of the trajectory optimization problem as a Dec-POSMDP. Section~\ref{sec:async_learning} presents the Asynchronous-QMIX algorithm for asynchronous environments. Section~\ref{sec:bandwidth_optimization} optimizes the bandwidth allocation under imperfect channel state information. Performance evaluation and discussions are presented in Section~\ref{sec:simulation}. Finally, conclusions and future work are drawn in Section~\ref{sec:conclusion}.

\emph{Notations}: we use superscripts $n$, $c$, and $i$ to indicate the indices of UAVs, cells, and sensor nodes (SNs), respectively; square brackets  $[k]$ to index the individual decision-making steps of each UAV; and subscript $m$ to index the joint global environment transition steps.

\section{Related work}
\label{sec:related_work}
Research on the deployment of UAVs for data collection has flourished in recent years, with a rich body of literature exploring a breadth of system configurations. Early research started with simple scenarios involving a single UAV, with static and deterministic assumptions, allowing traditional optimization approaches to be applied. Most studies focus on UAV trajectory planning and may jointly consider transmission scheduling \cite{yuan2022joint, zhan2017energy, wang2019energy, you20193d, sun2023max}, SN scheduling \cite{zhan2017energy}, bandwidth allocation \cite{samir2019uav, tran2021uav}, power control \cite{feng2021uav, tran2021uav, du2023time}. The common objectives are  mission completion time minimization \cite{yuan2022joint}, energy minimization at UAV or SNs \cite{zhan2017energy, wang2019energy}, data rate and total collected data maximization \cite{you20193d, feng2021uav, sun2023max}, outage probability~\cite{feng2021uav} and Age-of-Information (AoI) \cite{hu2020aoi} minimization. As missions spanning larger areas reveal the inadequacy of a single UAV, recent works turned the attention to multiple UAVs to meet the scaling requirements \cite{zhan2019completion, zhan2019aerial, zhang2021minimizing, xu2021minimizing,arsham25TOCD}. When the system scale becomes increasingly intricate, traditional optimization methods reveal limitations. Fortunately, reinforcement learning (RL), with its adaptability to complex and dynamic environments, can provide the flexibility needed to cope with practical scenarios. RL approach has been demonstrated as a potential solution for both single-UAV \cite{ding20203d, wang2021trajectory, fan2022ris} as well as multiple-UAV systems \cite{hu2020cooperative, oubbati2021multi, emami2023aoi, messaoudi2023uav}.

Despite the attainment of some promising results, there are still technical gaps separating the above studies from practical systems. In particular, studies often assume deterministic systems where either perfect channel state information (CSI) or data collection demand is known in advance at every possible UAV location \cite{yuan2022joint, zhan2017energy, wang2019energy, you20193d, sun2023max, samir2019uav, tran2021uav,feng2021uav, du2023time,hu2020aoi, zhan2019completion, zhan2019aerial, zhang2021minimizing, xu2021minimizing}. Meanwhile, learning-based studies \cite{ding20203d, wang2021trajectory, fan2022ris, hu2020cooperative, oubbati2021multi, emami2023aoi, messaoudi2023uav} often assume that the central controller and UAVs have the capability to observe the entire collection area in real-time. While these assumptions may facilitate the applications of conventional optimization and learning algorithms, they are not in line with practical scenarios, specifically in the context of remote data collection.

Recent studies have made efforts to tackle the challenges of partial observability, where the problem is often formulated as a Decentralized Partially Observable Markov Decision Process (Dec-POMDP), to which a multi-agent reinforcement learning algorithm is applied. In \cite{bayerlein2021multi}, the authors consider the trajectory planning problem for homogeneous UAVs in dense urban areas to maximize the amount of collected data. The problem is formulated as a Dec-POMDP, and then solved by independent reinforcement learning algorithms in which UAVs have different reward functions and are trained independently, while the coordination between UAVs remains open. The authors of \cite{oubbati2022synchronizing} consider two UAV groups, one for data collection and the other for energy transmission with the objectives of minimizing the AoI and maximizing the transmit energy.
Therein, actor-critic algorithm is used to overcome the challenges of partial observability, where the centralized critic can access the global states of the environment during training. A similar AoI minimization problem is studied in~\cite{li2022learning} where the data generation time of SNs is unpredictable. Under the partial observation assumption, a multi-agent algorithm with value-decomposition network (VDN) is adopted to learn UAVs' policies. In \cite{chen2022joint}, PPO algorithm is employed to optimize UAVs' trajectories, while UAV-user association is handled as a coalition formation game. The authors of \cite{wang2023cooperative} focus on optimizing UAVs' speeds, directions, and SN-selection at each hovering point to minimize average AoI, while considering various practical kinematic constraints. The problem is formulated as a Dec-POMDP, on which the well-known QMIX algorithm is applied to learn UAVs' policies. However, due to the representation of SN-selection action, the action space grows exponentially in the number of SNs, resulting in poor scalability.

Although offering fairly comprehensive solutions, the above-mentioned works still heavily rely on a critical assumption regarding the synchronization of UAVs for the convenience of mathematical modeling and problem-solving. Specifically, the time evolution is discretized into equally small intervals, and all UAVs simultaneously make decisions at the beginning of each interval. This assumption, however, is very optimistic in real-world scenarios, especially with large areas of interest and limited inter-UAV communication range. Moreover, synchronous decision-making is time-inefficient, as simultaneous actions necessitate unnecessary waiting periods due to the varying nature of decision epochs among different UAVs. Furthermore, discretizing the timeline as described can be theoretically problematic, as the decision-making under Dec-POMDP is itself PSPACE-complete, and the complexity grows double-exponentially in the planning horizon~\cite{bernstein2002complexity}.
In addition, the impacts of inter-UAV communications have been overlooked in existing studies. Given its practical feasibility, investigating inter-UAV communications during the mission is essential to enhance the cooperative tactics. 

Regarding asynchronous RL, study~\cite{omidshafiei2015decentralized} investigates a coordination problem for multiple robots in continuous environments. The original problem, formulated as a Dec-POMDP, is undecidable over continuous space without approximations or additional assumptions. To address this, the authors reformulate the problem as a Dec-POSMDP, where the original Dec-POMDP is approximated by a discrete environment with a finite number of asynchronous actions. Under Dec-POSMDP, Monte Carlo Search is used to explore sub-optimal policies. Unfortunately, this method is unsuitable for our work due to its inefficient sampling and inapplicability to non-deterministic environments, particularly when data collection demands are not precisely known beforehand. In~\cite{wang2021reducing}, the authors tackle the problem of bus bunching, where buses arrive at stops too closely, disrupting their schedules. They formulate the bus fleet control problem as an asynchronous reinforcement learning problem and propose an extension of the MADDPG algorithm for asynchronous environments, enabling buses to make decisions independently as they reach stops at different times. Recently, the authors in~\cite{yu2023asynchronous} addressed the challenge of multiple robots collaboratively exploring unknown regions, with the objective of minimizing exploration time. They propose an extension of the multi-agent Proximal Policy Optimization (MAPPO) algorithm tailored for asynchronous settings. 
While these studies aim to address asynchronous decision-making, their algorithms are tailored specifically for the characteristics of their own problems and are not directly applicable to our scenario.
Despite its significance, to the best of our knowledge, asynchronous decision-making has not yet been explored in wireless communication systems.

\begin{figure}
    \centering
    \includegraphics[width=0.8\linewidth]{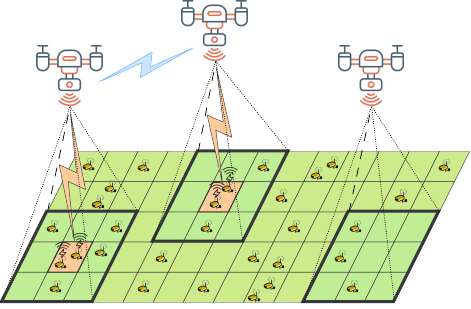}
    \caption{Illustration of the investigated system. Blue links represent inter-UAV communication, and orange links represent SN-UAV data transmission. Cells with dark green and bold borders under UAVs represent their observable regions.}
    \label{fig:system}
    \vspace{-0.4cm}
\end{figure}

\section{System Model and Problem Formulation}
\label{sec:system}
We consider a general data collection problem in which there are $N$ rotary-wing UAVs (not necessarily homogeneous) indexed by $\mathcal{N} = \{1, 2, \ldots, N\}$. Each UAV is equipped with $N_{tx}$ antennas and cooperates with other UAVs to collect data from sensor nodes (SNs) in a given area, as illustrated in Fig.~\ref{fig:system}. The UAV $n$ departs from an initial location $\textbf{w}^n_0 \in \mathbb{R}^2$, travels around the monitored area to explore and collect data from SNs, and then returns to a final destination $\textbf{w}^n_{\text{F}} \in \mathbb{R}^2$. The monitored area is divided into a grid of $H\times H$ equal-sized cells indexed by $\mathcal{C} = \{1, 2, \ldots, H\times H\}$, whose centers are denoted by the set $\mathcal{Q}_{\text{C}} = \{\mathbf{q}_{\text{C}}^c \in \mathbb{R}^2 : c\in \mathcal{C}\}$.
Let $\mathcal{I} = \{1, 2, \ldots, I\}$ be the set of $I$ SNs, where SN $i$ is located at a fixed location $\mathbf{q}_{\text{SN}}^i \in \mathbb{R}^2$. 
%
%

In many UAV-assisted applications, the amount of data to be collected is unknown prior to the UAV deployment, as it depends on dynamic and unpredictable factors. For instance, in search and rescue missions, the required data can fluctuate depending on the discovery of new targets or environmental obstacles encountered during flight. 
Similarly, in wildlife and habitat monitoring, the amount of data collected can be influenced by changing animal behaviors or the need for detailed imagery in specific areas.
To capture this stochastic quantity, let $D^i$ be the amount of data at the $i$th SN that needs to be collected.
The UAVs can move between the centers of cells to explore the data availability, or can hover above these points to collect data.  When a UAV decides to hover and collect data from a cell, it will only leave once all available data in that cell has been collected. The UAV can only communicate with other UAVs within its limited communication range. We also define the termination conditions for a UAV, wherein it concludes the task and flies to its final destination either when it is about to run out of energy or once it has verified that all data have been collected. Furthermore, the termination of each UAV does not affect the operations of the others.

\subsection{UAVs' Trajectories and Energy Consumption}
Let $\mathcal{T}^n = \{{t}^n[0], {t}^n[1], \ldots, {t}^n[K^n]\}$ represent the instants of time at which the $n$-th UAV makes decisions (detailed actions will be defined in Section~\ref{sec:components}) during its flight, and $\mathcal{K}^n = \{0, 1, \ldots, K^n\}$ be the indices of these decisions. Assume that all UAVs operate at the same fixed altitude $h$, the trajectory of the $n$-th UAV can be denoted as $\textbf{w}^n = \{\textbf{w}^n[0], \textbf{w}^n[1], \ldots, \textbf{w}^n[K^n + 1]\}$, where $\textbf{w}^n[k] \in \mathcal{Q}_{\text{C}}$ is the projection on the ground of the UAV's position at time $t^n[k]$. We have the first constraints regarding to the departure and arrival locations of UAVs given by
\begin{equation}
    \label{eq:ctr_init_final_loc}
    \textbf{w}^n[0] = \textbf{w}^n_0,~\textbf{w}^n[K^n + 1] = \textbf{w}^n_{\text{F}}, \forall n \in \mathcal{N}.
\end{equation}
Let $\tau_{\text{H}}^n[k]$ represent the amount of time the $n$-th UAV hovers above $\textbf{w}^n[k]$, and $\tau_{\text{F}}^n[k]$ represent the amount of time required for this UAV to fly from $\textbf{w}^n[k]$ to $\textbf{w}^n[k+1]$.
If, at time $t^n[k]$, the UAV decides to hover above its current location $\textbf{w}^n[k]$ to collect data, we have $\textbf{w}^n[k+1] = \textbf{w}^n[k]$ and $\tau_{\text{F}}^n[k] = 0$. Otherwise, we have $\tau_{\text{H}}^n[k] = 0$ and $\tau_{\text{F}}^n[k] = \frac{\lVert\textbf{w}^n[k+1] - \textbf{w}^n[k]\rVert_2}{v^n}$, where $v^n$ is the fixed velocity of the $n$-th UAV. The total operating time of the $n$-th UAV can then be calculated by
\begin{equation}
    T^n = {\sum}_{k=0}^{K^n}\tau_{\text{F}}^n[k] + {\sum}_{k=1}^{K^n}\tau_{\text{H}}^n[k],
\end{equation}
and the mission completion time is $T = \max_{n\in \mathcal{N}}T^n$.

The $n$-th UAV is powered by an on-board battery with limited capacity of $E^n_{\max}$. Since the communication energy is negligible compared to that required for propulsion \cite{zeng2019energy}, we ignore this component in our analyses. 
Following \cite{zeng2019energy}, the propulsion power consumption of a UAV operating at a fixed altitude is given by
\begin{align}
    \label{eq:uav_energy_model}
    P_{\text{UAV}}(V) = &~P_0\Big(1 + \frac{3V^2}{U_{\text{tip}}^2}\Big)\nonumber+ P_1 \Big(\sqrt{1 + \frac{V^4}{4v_0^4}} - \frac{V^2}{2v_0^2}\Big)^{\frac{1}{2}}\\ & + P_2V^3
\end{align}
where $V$ is the UAV's speed, $U_{\text{tip}}$ is the tip speed of the rotor blade, and $v_0$ represents the mean rotor-induced speed in hover, $P_0, P_1$ and $P_2$ are constants depending on the UAV design and operating environment, which respectively represent the blade profile power, induced power and parasite power. 

Let $E^n[k]$ be the remaining energy of UAV $n$ at the time $t^n[k]$ and all UAVs start with full batteries, i.e., $E^n[0] = E^n_{\max}$. The remaining energy $E^n[k]$ at $t^n[k]$ can be calculated as
\begin{equation}
    E^n[k] = E^n_{\max} - P_{\text{UAV}}(0)\sum_{k'=1}^{k-1}\tau_{\text{H}}^n[k'] - P_{\text{UAV}}(v^n)\sum_{k'=0}^{k-1}\tau_{\text{F}}^n[k'].
\end{equation}
To ensure safety during the mission, we impose the following constraints (for all $n, k$) to guarantee that UAVs always have sufficient energy to reach their final destinations
\begin{align}
    \label{eq:ctr_safe_energy_hover}
    &E^n_{\max} - E^n[k] - P_{\text{UAV}}(0)\tau_{\text{H}}^n[k] \geq \xi^n(\textbf{w}^n[k]) + \epsilon \\
    \label{eq:ctr_safe_energy_move}
    &E^n_{\max} - E^n[k] - P_{\text{UAV}}(v^n)\tau_{\text{F}}^n[k]  \geq \xi^n(\textbf{w}^n[k+1]) + \epsilon
\end{align}
where $\xi^n(\textbf{x}) = \frac{P_{\text{UAV}}(v)\lVert\textbf{x} - \textbf{w}^n_{\text{F}}\rVert_2}{v^n}$ is the energy required to fly to the final destination $\textbf{w}^n_{\text{F}}$ from $\textbf{x}$, and  $\epsilon$ is a small safety energy margin. Let $\textbf{w} = [\textbf{w}^1, \textbf{w}^2, \ldots, \textbf{w}^N]$ be the trajectories of all UAVs and $\textbf{b}$ be the bandwidth allocation strategy (which will be defined more details in the following subsection). The total energy consumed by all UAVs can be calculated by
\begin{equation}
    \label{eq:total_energy}
    \psi(\textbf{w},\! \textbf{b}\!)\! =\! \sum_{n\in\mathcal{N}}\!\sum_{k\in \mathcal{K}^n}\!\left(P_{\text{UAV}}(0)\tau_{\text{H}}^n[k]\! +\! P_{\text{UAV}}(v^n)\tau_{\text{F}}^n[k]\right).
\end{equation}
\\
We note that the bandwidth allocation $\textbf{b}$ directly affects the energy consumption in \eqref{eq:total_energy}.

\subsection{Data Transmission Model}
Let $\bs{h}^{in}(t) \in \mathbb{C}^{N_{tx}\times 1}$ be the channel coefficients between the $n$-th UAV and the $i$-th SN at the time $t$, which can be modeled as
\begin{equation}
    \label{eq:channel_model}
    \bs{h}^{in}(t) = \sqrt{\alpha^{in}(t)}\bs{g}^{in}(t)
\end{equation}
where $\alpha^{in}(t)$ is large-scale fading channel including pathloss and shadowing, and $g^{in}(t)$ is small-scale channels, and $N_{tx}$ is the number of receive antennas of the UAV. 
We adopt the common probabilistic modelling to present the large-scale fading channel, in which the channel is in line-of-sight with a probability of $P_{LoS}^{in}(t)$ and in non-LoS with a probability of $P_{NLoS}^{in}(t) = 1 - P_{LoS}^{in}(t)$. Following  \cite{al2014optimal}, the LoS probability can be approximated as 
\begin{equation}
    P_{LoS}^{in}(t) = \frac{1}{1 + a\exp(-b[\theta^{in}(t) - a])}
\end{equation}
where $a$ and $b$ are environment-dependent parameters, $\theta^{in}(t) = \frac{180}{\pi}\tan^{-1}\left(\frac{H}{\lVert\mathbf{w}^n(t) - \mathbf{q}_{\text{SN}}^i\rVert_2}\right)$ is the elevation angle between the $i$-th SN located at $\mathbf{q}_{\text{SN}}^i$ and the projection $\mathbf{w}^n(t)$ on the ground of the $n$-th UAV at time $t$. 
Therefore, we have
\begin{equation}
    \label{eq:large_fading}
    \alpha^{in}(t) = 
    \begin{cases}
       \alpha_0\left(d^{in}(t)\right)^{-\eta}, &\text{w.p. } P_{LoS}^{in}(t)\\
        \alpha_0\beta\left(d^{in}(t)\right)^{-\eta}, &\text{otherwise}
    \end{cases} 
\end{equation}
where $\alpha_0 = \left({4\pi f_c}/{c}\right)^{-2}$ is the free-space channel power gain at distance of 1m, $d^{in}(t) = \sqrt{\lVert\mathbf{w}^n(t) - \mathbf{q}_{\text{SN}}^i\rVert_2^2 + h^2}$ is the distance between the $i$-th SN and the $n$-th UAV at time $t$, $\eta \geq 2$ is the pathloss exponent, $\beta$ is the attenuation due to NLoS, $f_c$ is the carrier frequency, and $c$ is the speed of light.
We adopt the popular Rician channel model for UAV-SN links \cite{wang2023cooperative,you20193d} which reads:
$\bs{g}^{in}(t) = \bar{g}^{in}\sqrt{\kappa^{in}(t)/(\kappa^{in}(t) + 1)} + \hat{\bs{g}}^{in}\sqrt{1/(\kappa^{in}(t) + 1)}$,
where $\bar{g}^{in}$ is the deterministic LoS component with $|\bar{g}^{in}|~=~1$,  $\hat{\bs{g}}^{in}\sim \mathcal{CN}(0, \bs{1})$ represents NLOS parts, $\kappa^{in}(t)$ is the Rician-factor that depends on the elevation angle $\theta^{in}(t)$ as 
$\kappa^{in}(t) = A_1\exp{\left(A_2\theta^{in}(t)\right)}$, with $A_1$ and $A_2$ being environment-dependent constants. 

When hovering to collect data, the UAVs employ Frequency Division Multiple Access (FDMA) to receive data from all active SNs in the current cell. To mitigate inter-UAV interference, the UAVs  operate in orthogonal frequency bands of the same bandwidth $B = \frac{B_{max}}{N}$, where $B_{max}$ is the system bandwidth. Thus, the bandwidth allocation at one UAV is independent from other UAVs. 
Let $b^{in}(t)$ be the bandwidth that UAV $n$ allocates to active SN $i$ at time $t$ (only active SNs have data to transmit). 
We have the following constraint for $b^{in}(t)$, $\forall n\in \mathcal{N}, t\in \mathcal{T}^n$
\begin{align}
    \label{eq:ctr_B_tot}
    {\sum}_{i\in \mathcal{I}_n(t)}b^{in}(t) \leq B
\end{align}
where $\mathcal{I}_n(t)$ be the set of active SNs serving by UAV $n$ at time $t$. 
The instantaneous achievable upload data rate (bps) from SN $i$ to UAV $n$ at time $t$, assuming maximum ratio combining (MRC) receiver for low complexity and perfect CSI (imperfect CSI will considered in Section~\ref{sec:bandwidth_optimization}), can be calculated as
\begin{equation}
    \label{eq:data_rate}
    R^{in}(t) = b^{in}(t)\log_2\left(1 + {{\|\bs{h}^{in}(t)\|}^2 P_s}/{(b^{in}(t)N_0)}\right)
\end{equation}
where $P_s$ is the fixed transmit power of SNs and $N_0$ is the noise power spectral density.

Consider a particular time $t = t^n[k]$ when the $n$-th UAV makes its $k$-th decision. Due to the strong LoS of the UAV-SN channel when hovering, and for ease of notations, we assume that the channels remain unchanged during UAV's hovering, the hovering time above $\textbf{w}^n[k]$ can then be determined as
\begin{equation}
    \label{eq:hovering_time}
    \tau_{\text{H}}^n[k] =\max_{i\in \mathcal{I}_n(t^n[k])}\left\{\frac{D^i}{R^{in}(t^n[k])} ~\Bigg| ~R^{in}(t^n[k]) > 0\right\}.
\end{equation}
Let $b^{in}[k] = b^{in}(t^n[k])$ and $\textbf{b}^n = \{b^{in}[k] : i\in \mathcal{I}, k\in \mathcal{K}^n\}$ be the bandwidth allocation strategy of UAV $n$.
Given trajectories $\textbf{w}$  and bandwidth allocation strategies $\textbf{b} = [\textbf{b}^1, \textbf{b}^2, \ldots, \textbf{b}^N]$, we have
the total collected data given by
\begin{equation}
    \label{eq:total_data}
    \Phi(\textbf{w}, \textbf{b}) = \sum_{n\in \mathcal{N}}\sum_{k\in \mathcal{K}^n}\sum_{i\in \mathcal{I}_n(t^n[k])} D^i.
\end{equation}
Finally, we have a constraint representing termination conditions of UAVs as follows
\begin{equation}    
    \label{eq:ctr_termination}
    \left(E^n[K^n\! +\! 1]\! -\! \epsilon\right)\!\left({\sum}_{i\in \mathcal{I}}\!D^i\! -\! \Phi(\textbf{w}, \textbf{b})\right)\! \leq\! 0, \forall n\in\mathcal{N}
\end{equation}
where $E^n[K^n + 1]$ is the remaining energy of the $n$-th UAV when arriving at its final destination $\textbf{w}^n_{\text{F}}$. This constraint forces UAVs to continue exploring and collecting data until either their energy falls below the safety level or all active data in the current mission is collected.
\vspace{-0.3cm}
\subsection{Problem Formulation}
Our aim is to jointly optimize the trajectories $\textbf{w}$ and bandwidth allocation strategies $\textbf{b}$ of all $N$ UAVs, such that the overall energy efficiency is maximized. This problem can be mathematically formulated as follows
\begin{align}
    \max_{\textbf{w}, \textbf{b}}  \quad&\frac{\Phi(\textbf{w}, \textbf{b})}{\psi(\textbf{w}, \textbf{b})}\label{prob:P}\tag{P} \\
    \text{s.t.} \quad &\eqref{eq:ctr_init_final_loc}, \eqref{eq:ctr_safe_energy_hover}, 
    \eqref{eq:ctr_safe_energy_move},
    \eqref{eq:ctr_B_tot}, \eqref{eq:ctr_termination}\nonumber
\end{align}
where $\Phi(\textbf{w}, \textbf{b})$ and $\psi(\textbf{w}, \textbf{b})$ are given in \eqref{eq:total_data} and \eqref{eq:total_energy}, respectively. In the above problem, constraint \eqref{eq:ctr_init_final_loc} determines the initial and final locations of the UAVs. Constraints \eqref{eq:ctr_safe_energy_hover} and \eqref{eq:ctr_safe_energy_move} ensure that, at every step, the UAVs have sufficient energy to reach their final locations. Constraint \eqref{eq:ctr_B_tot} guarantees that the total bandwidth used by each UAV does not exceed its available bandwidth. Finally, constraint \eqref{eq:ctr_termination} ensures that a UAV heads to its final location only when all data has been collected or it is about to run out of energy.
Problem \eqref{prob:P} is intractable due to $i)$ high level of uncertainty of CSI and stochasticity of data collection demands; $ii)$ non-deterministicity of the planning horizon $K^n$ and non-convexity in the objective function and constraints; $iii)$ uncertainty in such multi-agent systems, where the operation of each UAV influences the environment and consequently affects the decisions of other UAVs. 
Besides, effective cooperation among UAVs is hindered by limitations in their observation and communication abilities. 
Therefore, it is  difficult, if not impossible, to jointly solve problem \eqref{prob:P} to optimality using conventional methods. 

\section{Trajectory Optimization under Dec-POSMDP}
\label{sec:mdp}
As pointed out in the previous subsection, optimizing the UAVs' trajectory using conventional optimization methods is infeasible due to the lack of information about the SNs' activities as well as the full system states. In this section, we first introduce the Dec-POSMDP framework for trajectory optimization and then define its relevant components in detail.

\subsection{Dec-POSMDP Framework for Trajectory Optimization}
The trajectory optimization problem can be represented by a tuple $\langle\mathcal{N}, \mathcal{S}, \mathcal{A}, P, R, \mathcal{Z}, \gamma\rangle$, where 
$\mathcal{S}$ is the global state space of the environment, $\mathcal{A} = \times_{n\in \mathcal{N}}\mathcal{A}^n$ is the joint action space, $P: \mathcal{S}\times \mathcal{A}\times\mathcal{S}\rightarrow [0, 1]$ is the environment transition kernel, $R:\mathcal{A}\times\mathcal{S}\rightarrow\mathbb{R}$ is a shared reward function contributed by all UAVs, $\mathcal{Z} =  \times_{n\in \mathcal{N}}\mathcal{Z}^n$ is the joint observation space, and finally, $\gamma$ is the discount factor. Note that our proposed solution is model-free, meaning it does not assume or require any knowledge about $P$.
Let $z^n[k] \in \mathcal{Z}^n$ be the local observation received by the $n$-th UAV and $u^n[k] \in \mathcal{A}^n$ be the action chosen by this UAV at time $t^n[k]$. We have the joint action-observation history of the $n$-th UAV until $t^n[k]$ defined as follows
\begin{equation}
    H^n[k] = \left(z^n[1], u^n[1], z^n[2],  \ldots, u^{n}[k-1], z^{n}[k]\right).
\end{equation}
Let $\mathcal{T} = \bigcup_{n\in\mathcal{N}}\mathcal{T}^n = \{\hat{t}_1, \hat{t}_2, \ldots, \hat{t}_m, \ldots\}$ be the set of all decision-making instants of all UAVs sorted in the non-decreasing order. 
Let $s_m$ denote the global state captured at time $\hat{t}_m$ and $\textbf{u}_m = \left(u_m^{1}, u_m^{2}, \ldots, u_m^{N}\right)$ be the joint action of all UAVs taken at this time. The joint action $\textbf{u}_m$ here includes two parts, including a new action calculated by a UAV that finishes its action at $\hat{t}_m$ and the on going actions of other UAVs calculated before $\hat{t}_m$.
Let define the reward function for executing the joint action $\textbf{u}_m$ in state ${s}_{m}$ at time $\hat{t}_m$ by
\begin{equation}
    \label{eq:reward_brief}
    R({s}_{m}, \textbf{u}_{m}) = \int_{\hat{t}_m}^{\hat{t}_{m+1}}\gamma^{t - \hat{t}_m} \widehat{R}(s_m, \textbf{u}_m, t)dt
\end{equation}
where $\widehat{R}(s_m, \textbf{u}_m, t)$ is the total instantaneous reward received by all UAVs at time $t$ for taking action $\textbf{u}_m$ in state $s_m$, which will be detailed in the next subsection. Precisely, ${R}({s}_{m}, \textbf{u}_{m})$ is the total reward accumulated by all UAVs from $\hat{t}_{m}$ to $\hat{t}_{m+1}$. Let $\pi = \times_{n\in \mathcal{N}} \pi^{n}$ be the decentralized joint policy where $\pi^{n}$ is the local policy of UAV $n$ that maps the local action-observation history $H^n[k]$ to the next action $u^n[k]$. Under the policy $\pi$, let define the joint state value function as
\begin{equation}
    {V}_{\texttt{tot}}^{\pi}({s_m}) = \mathbb{E}_{ \pi}\Big[{\sum}_{m=0}^{\infty}\gamma^{\hat{t}_m} {R}({s}_{m}, \textbf{u}_{m}) ~|~ s_0 = s_m\Big],
\end{equation}
and the state-action value function as
\begin{align}
    &Q_{\texttt{tot}}^{\pi}({s_m}, \textbf{u}_m)\nonumber\\
    &\ = \mathbb{E}_{ \pi}\Big[{\sum}_{m=0}^{\infty}\gamma^{\hat{t}_m} {R}({s}_{m}, \textbf{u}_{m}) ~|~ s_0 = s_m, \textbf{u}_0= \textbf{u}_m\Big].
\end{align}
The problem can then be defined as finding the joint policy $\pi^*$ to maximize the values of all states:
\begin{equation}
    \pi^* = \underset{\pi}{\text{argmax}}~ {V}_{\texttt{tot}}^{\pi}({s}_m), \forall {s_m} \in {\mathcal{S}}.
\end{equation}

\subsection{Detailed Components} \label{sec:components}
We now transform the trajectory optimization problem into the Dec-POSMDP by defining its components as follows.

\subsubsection{UAVs' termination conditions} as defined in \eqref{eq:ctr_termination}, the termination of a UAV depends on its energy level and the completion of collecting task. To mitigate the chance of having an empty battery during the flight, before taking actions the UAV checks if the remaining energy approaches the minimum safety level defined in constraints \eqref{eq:ctr_safe_energy_hover} and \eqref{eq:ctr_safe_energy_move}. To handle the second condition, each UAV $n$ maintains a completion map $G^n \in \{0, 1\}^{H^2\times 1}$ indicating its belief about the status of every cell. This map is initialized to zeros at the beginning, indicating the `yet-completed' status. Over time, the map is updated after every UAV action based on its observations. Specifically, $G^{cn}$ is set to~1, or `completed', if the $n$-th UAV collects all data in the $c$-th cell or observes that there is no data in the cell. The task is then considered completed by UAV $n$ if all elements of $G^n$ are equal to one.

\subsubsection{State space} each state $s_m \in \mathcal{S}$ includes positions, energy levels, and completion maps of all UAVs, and the true completion map.

\subsubsection{Observation} the local observation $z^n[k]$ of the $n$-th UAV includes its remaining energy, current position, local completion map $G^n$, and total data collection demand of each cell within its current observable region.

\subsubsection{Action space} 
the action space of the UAV $n$ consists of five actions $\mathcal{A}^n = \{0, 1, 2, 3, 4\}$, which represent the action of hovering to collect data, moving forward, to the right, backward, and to the left, respectively. To enhance sample efficiency and prevent UAVs from taking nonsensical actions, we impose the following constraint: if a UAV moves to a specific location, it is not allowed to immediately return to its previous location. Specifically, the UAV cannot perform two consecutive actions of (1, 3), (3, 1), (2, 4), or (4, 2) in successive time steps. This is achieved by simply masking the prohibited actions during the sampling process.

\subsubsection{Communication message between UAVs} When two UAVs are within their communication range, they share positions and synchronize the completion map to assist each other in verifying the mission completion for efficient exploration.

\subsubsection{Reward function}
as defined in \eqref{eq:reward_brief}, the reward function $R(s_m, \textbf{\textbf{u}}_m)$ under the joint policy $\pi$ is the accumulation of the instantaneous reward $\widehat{R}(s_m, \textbf{u}_m, t)$. Here, we define this instantaneous reward as follows:
\begin{equation}
    \label{eq:imm_reward}
    \widehat{R}(s_m, \textbf{u}_m, t) = \sum_{n \in \mathcal{N}}\sum_{i \in \mathcal{I}}\varphi^{in}(t) + \alpha \Gamma(\pi)\mathbbm{1}(s_m, \textbf{u}_m, t)
\end{equation}
where\footnote{The reward function can be readily applied to the imperfect CSI scenario by using the robust rate formulation derived in \eqref{eq:data_rate_imp_csi} instead of $R^{in}(t)$. Bandwidth allocation is incorporated during the trajectory learning via the reward function.} 
\begin{equation}
    \varphi^{in}(t) = \begin{cases}
        R^{in}(t), &\text{if $u^n_m = 0$}\\
        -0.01, &\text{otherwise.}
    \end{cases}
\end{equation}
In the above equations, $R^{in}(t)$ is the upload data rate given in~\eqref{eq:data_rate}, $\alpha$ is a scaling parameter, $\Gamma(\pi)$ is the energy efficiency achieved under the policy $\pi$ given in the objective function of \eqref{prob:P}, and $\mathbbm{1}(s_m, \textbf{u}_m, t)$ is an indicator function indicating whether taking action $\textbf{u}_m$ in state $s_m$ leads to the termination of the last UAV at time $t$. 
This reward function can be simply interpreted as follows. If taking action $\textbf{u}_m$ in state $s_m$ does not lead to the termination state, the reward $\widehat{R}(s_m, \textbf{u}_m, t)$ at time $t$ is the total reward of all UAVs, where the reward contributed by UAV $n$ is its data collection rate $R^{in}(t)$ if it is collecting data (indicated by its action $u_m^n = 0$), and a small negative reward of $-0.01$ if it is moving between cells. On the other hand, if taking action $\textbf{u}_m$ in state $s_m$ leads to the termination state (i.e., mission completion), the UAVs receive a reward proportional to their energy efficiency. The immediate reward $\varphi^{in}(t)$ in \eqref{eq:imm_reward} serves as a shaping function to support the main learning goal of maximizing energy efficiency, which is only activated at the final moment when the last UAV reaches its destination.  With a sufficiently large scaling parameter $\alpha$ (which is set to $\max_{n\in \mathcal{N}}\{E^n_{\max}\}$ in our experiments), equation \eqref{eq:imm_reward} ensures that UAVs receive a large reward only when all of them complete the mission. Moreover, by assigning a small negative reward to each movement action, UAVs are incentivized to cooperate and complete the mission in as few steps as possible, thereby reducing energy consumption and enhancing energy efficiency.
Combine \eqref{eq:reward_brief} and \eqref{eq:imm_reward}, we obtain the form of the reward function as follows
\begin{align}
    \label{eq:reward_plain}
    &R({s}_{m}, \textbf{u}_{m})\nonumber\\ 
    &= \int_{\hat{t}_m}^{\hat{t}_{m+1}}\gamma^{t - \hat{t}_m}\Big(\sum_{\underset{n \in \mathcal{N}}{i \in \mathcal{I}}}\varphi^{in}(t) + \alpha\Gamma(\pi)\mathbbm{1}(s_m, \textbf{u}_m, t) \Big)dt\nonumber\\
    &= \sum_{\underset{n \in \mathcal{N}}{i \in \mathcal{I}}}\varphi^{in}\left(\hat{t}_m\right)\int_{\hat{t}_m}^{\hat{t}_{m+1}}\gamma^{t - \hat{t}_m}dt + \gamma^{\hat{\tau}_m}\alpha \Gamma(\pi)\hat{\mathbbm{1}}(s_m, \textbf{u}_m)\nonumber \\
    &=\frac{\gamma^{\hat{\tau}_m} - 1}{\ln{\gamma}} \sum_{\underset{n \in \mathcal{N}}{i \in \mathcal{I}}}\varphi^{in}\left(\hat{t}_m\right) + \gamma^{\hat{\tau}_m}\alpha \Gamma(\pi)\hat{\mathbbm{1}}(s_m, \textbf{u}_m)
\end{align}
where $\hat{\tau}_m = \hat{t}_{m+1} - \hat{t}_{m}$ is the duration of action $\textbf{u}_m$ taken in state $s_m$ and $\hat{\mathbbm{1}}(s_m, \textbf{u}_m)$ indicating whether taking action $\textbf{u}_m$ in state $s_m$ leads to the termination of the last UAV. Here, the second equation is due to the assumption that channel states remain unchanged during UAV's hovering, and that $\mathbbm{1}(s_m, \textbf{u}_m, t)$ is only activated at the last moment when the last UAV arrives at its final destination.

\section{Asynchronous Learning Algorithm}
\label{sec:async_learning}
This section introduces an asynchronous learning algorithm for learning UAVs' trajectory policies. To enhance the scalability of the proposed algorithm, we also propose state downsampling method to reduce the state space. Finally, a computational complexity analysis is provided, where we focus on the computational overhead required for each UAV to calculate an action in the deployment.

\subsection{Asynchronous QMIX}
The conventional QMIX algorithm \cite{rashid2020monotonic} aims to learn a centralized action-value function $Q_{\texttt{tot}}(s, \textbf{u})$, which is factorized into $N$ individual utility functions $Q^n(H^n, u^n)$ representing the goodness of taking action $u^n$ on history $H^n$. The principal premise that makes QMIX efficient is the consistent relationship between the deterministic greedy centralized policy and the deterministic greedy decentralized policies, which results from the monotonicity between $Q_{\texttt{tot}}$ and $Q^n$, i.e., $\partial Q_{\texttt{tot}} / \partial Q^n \geq 0, \forall n \in \mathcal{N}$. When such monotonicity is assured, the centralized training can be executed relying on the following relation (neglecting the order of UAVs)
\begin{equation}
    \underset{\textbf{u}\in \mathcal{A}}{\text{argmax}}Q_{\texttt{tot}}(s, \textbf{u}) = \left\{\underset{u^n \in \mathcal{A}^n}{\text{argmax}}Q^n(H^n, u^n)\right\}_{n\in \mathcal{N}}.
\end{equation}
This result implies that local actions that improve the local $Q^n$ values will also enhance the joint action-value function $Q_{\texttt{tot}}$, enabling decentralized agents to operate independently based on the greedy policy applied to their local $Q^n$ values.

It is worth noting that the above-mentioned original QMIX algorithm is only designed for synchronous environments in which all agents take actions simultaneously at each time step. As will be shown in Section~\ref{sec:result - performance comparison}, when this assumption does not hold, its performance significantly degrades. 
To enable QMIX to be applicable in our asynchronous
environment, we make an important modification to the QMIX algorithm as follows.
Let $\hat{n}$ be the UAV that finishes its action at time $\hat{t}_m$, and $H^{n}_m$ be the local joint action-observation history of UAV $n$ recorded until $\hat{t}_m$ (i.e., $H^{n}_m = H^n[k^*]$ where $k^* = \text{argmax}_{k\in \mathcal{K}^n}\{t^n[k]~|~t^n[k] \leq \hat{t}_m\}$).
As mentioned in the preceding section, there is a new action $u^{\hat{n}}_m$ calculated by UAV $\hat{n}$ at $\hat{t}_m$, while the actions of other UAVs being executed. Let $\textbf{u}^{-\hat{n}}_m$ denote the set of these ongoing actions, we have the joint action at time $\hat{t}_m$ given by $\textbf{u}_m = \{u^{\hat{n}}_m\} \cup\textbf{u}^{-\hat{n}}_m$. The aim of the asynchronous algorithm is to learn a joint (centralized) action-value function $Q_{\texttt{tot}}(s_m, \textbf{u}_m | \textbf{u}^{-\hat{n}}_m)$ conditioned on the ongoing actions $\textbf{u}^{-\hat{n}}_m$. We have the following result for our asynchronous algorithm.
\begin{lemma}
    \label{lemma:aqmix_monotonic}
    Given that $\frac{\partial Q_{\texttt{tot}}}{\partial Q^n} \geq 0\ \forall n \in \mathcal{N}$, and that
    \begin{align*}
        Q_{\texttt{tot}}&\left(s_m, \textbf{u}_m | \textbf{u}^{-\hat{n}}_m\right)\\ &= Q_{\texttt{tot}}\left(Q^{\hat{n}}(H^{\hat{n}}_m, u^{\hat{n}}_m), \{Q^n(H^n_m, u_m^n)\}_{n\in \mathcal{N}\setminus\{\hat{n}\}}\right),
    \end{align*}
    then we have
    $$\underset{\textbf{u}_m \in \mathcal{A}}{\normalfont\text{argmax}}Q_{\texttt{tot}}\left(s_m, \textbf{u}_m | \textbf{u}^{-\hat{n}}_m\right) =\{\underset{u^{\hat{n}}_m \in \mathcal{A}^{\hat{n}}}{\normalfont\text{argmax}}Q^{\hat{n}}\left(H^{\hat{n}}_m, u^{\hat{n}}_m\right)\} \cup\textbf{u}^{-\hat{n}}_m.$$
\end{lemma}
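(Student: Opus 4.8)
The plan is to exploit the defining feature of the asynchronous setting: at the decision instant $\hat{t}_m$ only UAV $\hat{n}$ selects a fresh action, while the remaining actions $\textbf{u}^{-\hat{n}}_m$ are already committed and are held fixed by the conditioning. Consequently the joint maximization over $\textbf{u}_m = \{u^{\hat{n}}_m\}\cup\textbf{u}^{-\hat{n}}_m$ collapses to a maximization over the single free variable $u^{\hat{n}}_m$, and the whole argument reduces to showing that this univariate maximization is consistent with maximizing $Q^{\hat{n}}$ alone. In this sense the statement is a restriction of the original QMIX decentralizability result to the one agent that is currently re-planning.

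First I would observe that, because we condition on $\textbf{u}^{-\hat{n}}_m$, the action–observation histories $H^n_m$ of every UAV $n\neq\hat{n}$ are determined, so the utilities $Q^n(H^n_m, u_m^n)$ for $n\in\mathcal{N}\setminus\{\hat{n}\}$ are constants throughout the optimization. Under the factorization hypothesis, $Q_{\texttt{tot}}(s_m,\textbf{u}_m\mid\textbf{u}^{-\hat{n}}_m)$ then depends on $u^{\hat{n}}_m$ exclusively through the single argument $Q^{\hat{n}}(H^{\hat{n}}_m, u^{\hat{n}}_m)$; all other arguments are frozen.

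Next I would invoke the monotonicity assumption $\partial Q_{\texttt{tot}}/\partial Q^{\hat{n}}\geq 0$ to conclude that $Q_{\texttt{tot}}$, viewed as a function of $Q^{\hat{n}}$ with the remaining arguments fixed, is non-decreasing. Hence any $u^{\hat{n}}_m$ that increases $Q^{\hat{n}}$ cannot decrease $Q_{\texttt{tot}}$, so a maximizer of $Q^{\hat{n}}(H^{\hat{n}}_m,\cdot)$ is also a maximizer of $Q_{\texttt{tot}}(s_m,\cdot\mid\textbf{u}^{-\hat{n}}_m)$. Appending the fixed set $\textbf{u}^{-\hat{n}}_m$ to this greedy local action then yields exactly the claimed identity.

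The main obstacle, though a mild one, is the non-strictness of the monotonicity ($\geq 0$ rather than $>0$): wherever $\partial Q_{\texttt{tot}}/\partial Q^{\hat{n}}$ vanishes, a maximizer of $Q_{\texttt{tot}}$ need not be unique to the maximizers of $Q^{\hat{n}}$, so the two argmax sets coincide only under a tie-breaking convention. I would resolve this exactly as in \cite{rashid2020monotonic}, taking the greedy maximizer of $Q^{\hat{n}}$ to be well defined (as enforced in practice by the monotonic mixing network), so that the set equality holds as stated.
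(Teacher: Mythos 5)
Your proof is correct and follows essentially the same route as the paper's: hold the ongoing actions fixed (noting, as the paper does, that the local utilities $Q^n(H^n_m,u^n_m)$ for $n\in\mathcal{N}\setminus\{\hat{n}\}$ are constants because those agents receive no new observations until their actions finish), then use the monotonicity of the mixing function to pull the maximization over the single free action $u^{\hat{n}}_m$ through $Q_{\texttt{tot}}$ onto $Q^{\hat{n}}$. Your additional remark about non-strict monotonicity and the resulting need for a tie-breaking convention is a legitimate refinement that the paper's own proof silently glosses over, but it does not alter the substance of the argument.
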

\begin{proof}
    Since $\frac{\partial Q_{\texttt{tot}}}{\partial Q^n} \geq 0\ \forall n \in \mathcal{N}$, we have
    \begin{align*}
        Q&_{\texttt{tot}}\left(Q^{\hat{n}}(H^{\hat{n}}_m, u^{\hat{n}}_m), \{Q^n(H^n_m, u_m^n)\}_{n\in \mathcal{N}\setminus\{\hat{n}\}}\right)\\
        &\leq Q_{\texttt{tot}}\left(\underset{u^{\hat{n}}_m \in \mathcal{A}^{\hat{n}}}{\max}Q^{\hat{n}}(H^{\hat{n}}_m, u^{\hat{n}}_m), \{Q^n(H^n_m, u_m^n)\}_{n\in \mathcal{N}\setminus\{\hat{n}\}}\right)\\
        &=\underset{u^{\hat{n}}_m\in \mathcal{A}^{\hat{n}}}{\max}Q_{\texttt{tot}}\left(Q^{\hat{n}}(H^{\hat{n}}_m, u^{\hat{n}}_m), \{Q^n(H^n_m, u_m^n)\}_{n\in \mathcal{N}\setminus\{\hat{n}\}}\right).
    \end{align*}
    Moreover, since an agent only receives a new observation, updates local history and calculates a new action when an action is finished, $Q^n(H^n_m, u^n_m)$ remains unchanged for $n\in \mathcal{N}\setminus\{\hat{n}\}$. 
Combined with the definition of $Q_{\texttt{tot}}(s_m, \textbf{u}_m | \textbf{u}^{-\hat{n}}_m)$, we have
    \begin{align*}
        \underset{\textbf{u}_m \in \mathcal{A}}{\text{max}}&Q_{\texttt{tot}}(s_m, \textbf{u}_m | \textbf{u}^{-\hat{n}}_m)\\ 
        &= \underset{\textbf{u}_m \in \mathcal{A}}{\text{max}}Q_{\texttt{tot}}\left(Q^{\hat{n}}(H^{\hat{n}}_m, u^{\hat{n}}_m), \{Q^n(H^n_m, u_m^n)\}_{n\in \mathcal{N}\setminus\{\hat{n}\}}\right)\\
        &=\underset{u^{\hat{n}}_m \in \mathcal{A}^{\hat{n}}}{\max}Q_{\texttt{tot}}\left(Q^{\hat{n}}(H^{\hat{n}}_m, u^{\hat{n}}_m), \{Q^n(H^n_m, u_m^n)\}_{n\in \mathcal{N}\setminus\{\hat{n}\}}\right)
    \end{align*}
    which directly implies the result and completes the proof.
\end{proof}
\noindent This lemma implies that once the monotonicity between $Q_{\texttt{tot}}$ and $Q^n$ is established, the local deterministic greedy policies remain applicable for agents to calculate their actions in asynchronous environments.

\begin{figure*}
    \centering
    \includegraphics[width=0.8\linewidth]{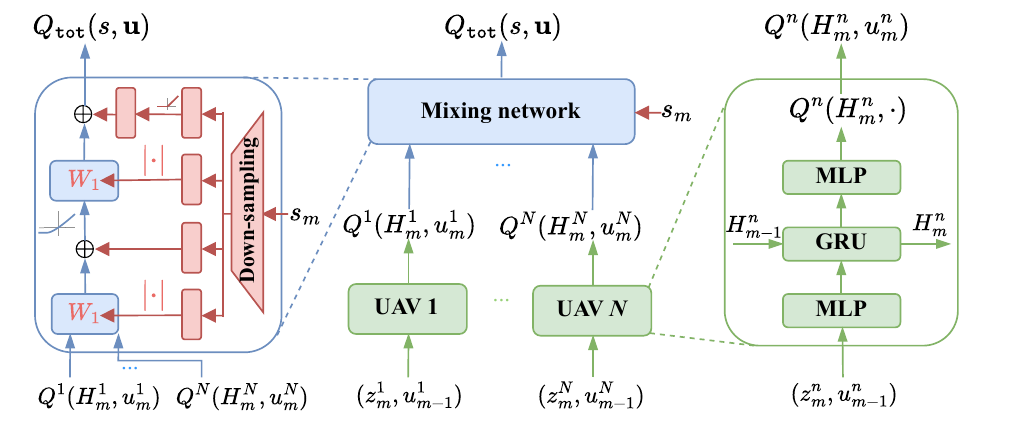}
    \caption{The architecture of the proposed algorithm. The green, blue, and red blocks represent UAVs' policy networks, mixing network, and the hypernetwork. A downsampling layer is included in the hypernetwork to reduce the state space.}
    \label{fig:aqmix}
    \vspace{-0.5cm}
\end{figure*}

Fig.~\ref{fig:aqmix} illustrates the network architecture in our proposed algorithm. Specifically, the architecture includes two components: i) agent networks that take local action-observation histories and the agent indices as input and output the local $Q^n$ values, and ii) a mixing network that uses the global state information to produce $Q_{\texttt{tot}}$ from $N$ local $Q^n$ values.
We employ DRQN~\cite{hausknecht2015deep} with parameter sharing for each agent network, while for the mixing network, a two-layer fully-connected neural network with ELU nonlinearity is used. 
To establish the monotonic relationship between $Q_{\texttt{tot}}$ and $Q^n$ (i.e., $\partial Q_{\texttt{tot}}/\partial Q^n \geq 0$), all weights of the mixing network are constrained to be non-negative. To this end, a hypernetwork is used to generate weights and biases for the mixing network, taking the global state as the input. Fully-connected layers with ReLU activation, followed by an absolute function, are employed to ensure non-negative weights. 
Since each agent network is conditional on its local observation and history, the learned policies can be extracted and executed independently without any negative impact on performance. We note that although having similar network architecture (except the down-sampling layer) as in the QMIX algorithm \cite{rashid2020monotonic}, our proposed asynchronous learning algorithm operates differently, especially in the way each agents takes its action. 

The training is end-to-end based on a replay buffer, aiming to minimize the loss function
\begin{equation}
    \mathcal{L}(\theta_m) = \mathbb{E}\left[\left(y_{\texttt{tot}} - Q_{\texttt{tot}}(s_m, \textbf{u}_m | \textbf{u}_m^{-\hat{n}}; \theta_m)\right)^2\right]
\end{equation}
where $y_{\texttt{tot}}$ is the target value estimated by one-step bootstrapping as
\begin{equation}
    y_{\texttt{tot}} = R(s_m, \textbf{u}_m) + \gamma^{\hat{\tau}_{m}}\underset{\textbf{u}_{m+1}}{\max}Q_{\texttt{tot}}(s_{m+1}, \textbf{u}_{m+1} | \textbf{u}^{-\hat{n}}_{m+1}; \theta^-_m) \notag
\end{equation}
wherein $\theta_m$ and $\theta^-_m$ is the parameters of the primary and the target networks.

\begin{algorithm}[h]
\caption{Asynchronous-QMIX}
\label{alg:aqmix}
\begin{algorithmic}[1]
\STATE Initialize the training environment, agents' policy and value networks, and buffer $\mathcal{D} \gets\varnothing$;
\STATE $m\gets 1$;
\STATE Observe $\mathbf{z}_1$ and sample action $\mathbf{u}_1$;
\STATE Initialize $\mathcal{Q} \gets \{(n, \tau^n_1): {n\in\mathcal{N}}\}$;\quad // $\tau^n_m$ is the time taken by agent $n$ to complete $u^n_m$
\WHILE{environment not terminal}
    \STATE $(\hat{n}, \hat{t}_{m+1}) \gets \arg\min_{(n, t)\in\mathcal{Q}}t$;
    \STATE Agent $\hat{n}$ observes reward $R(s_m, \mathbf{u}_m)$ and new observation $z^{\hat{n}}_{m+1}$;
    \STATE Agent $\hat{n}$ updates local history: $H^{\hat{n}}_{m + 1} \gets (H^{\hat{n}}_{m}, {u}^{\hat{n}}_{m}, z^{\hat{n}}_{m + 1})$;
    \STATE Agent $\hat{n}$ selects new action: $u^{\hat{n}}_{m+1} \gets \arg\max_{u\in\mathcal{A}^{\hat{n}}} Q^{\hat{n}}(H^{\hat{n}}_{m+1},u)$;
    \STATE Update joint action and observation: $\mathbf{u}_{m+1} \gets \{u^{\hat{n}}_{m+1}\}\cup \mathbf{u}^{-\hat{n}}_{m}$; $\mathbf{z}_{m+1} \leftarrow \{z^{\hat{n}}_{m+1}\}\cup \mathbf{z}^{-\hat{n}}_{m}$;
    \STATE Compute the new finish time:  $t_{\text{new}}\gets\hat{t}_m + \tau^{\hat{n}}_{m + 1}$;
    \STATE $\mathcal{Q}\gets\mathcal{Q}\cup\{(\hat{n},t_{\text{new}})\}$;
    \STATE $\mathcal{D}\gets \mathcal{D}\cup\{(s_m, \mathbf{z}_m, \mathbf{u}_m, R(s_m, \mathbf{u}_m), s_{m+1}, \mathbf{z}_{m+1})\}$;
    \STATE Update agents' policy and value networks;
    \STATE $m \gets m + 1$;
\ENDWHILE
\end{algorithmic}
\end{algorithm}

We name the proposed framework Asynchronous-QMIX (AQMIX), which is outlined in Algorithm \ref{alg:aqmix}. Since different actions of different agents are completed at different timestamps, the idea is to maintain a list $\mathcal{Q}$ that records when each agent finishes its action. At each training step $m$, only the agent with the smallest associated timestamp is processed, i.e., its observation is updated and a new action is generated. This asynchrony is the key distinction between the proposed algorithm and the canonical synchronous QMIX.

\subsection{Enhance Scalability through State Downsampling}
The hypernetwork in the mixing network requires the positions of UAVs and their completion maps to generate the mixing parameters benefiting the UAVs cooperation. However, the incorporation of these maps into the state significantly expands the state space. With $N$ UAVs and a monitored area of $H^2$ cells, incorporating the completion maps exponentially expands the state space by $2^{NH^2}$. In fact, when the numbers of cells and UAVs grow, the state space dimension is dominated by the size of these maps. This expansion not only increases training time but also makes the model harder to train due to the curse of dimensionality, potentially degrading learning performance (as we will show in the experiment later). Moreover, including the completion status of each UAV at every cell may be redundant. This is because, from the global mixing network perspective, knowing UAVs' positions and their completion status in each collecting region (i.e., group of cells in close proximity) is sufficient for approximately evaluating a cooperative strategy among UAVs.

Motivated by the observations above, we employ a sum-pooling layer to downsample the resolution of completion maps before feeding them into the hypernetwork of the mixing network. With a kernel size of $M\times M$ and non-overlapping sampling windows, this layer can reduce the state space by a factor of $2^{M^2}$, thereby reducing the size of the mixing network and the training time. The downsampling layer is also illustrated Fig.~\ref{fig:aqmix} as the first block of the hypernetwork.

\subsection{Computational Complexity in Deployment}
Since only agent networks are extracted and deployed on UAVs after training, the complexity of calculating actions in deployment does not depend on the mixing network. The input of each agent network includes UAV's position, energy level, data collection demands of observable cells, and the completion map $G^n \in \{0, 1\}^{H^2\times 1}$. Since the number of observable cells does not exceed the size of completion map $G^n$, the input size is dominated by $H^2$. As illustrated in Fig.~\ref{fig:aqmix}, each agent network comprises one fully-connected input layer, followed by a GRU layer, and finally, a fully-connected output layer. Given that the complexity of a fully-connected layer with input size $N_{\text{x}}$ and hidden size $N_{\text{h}}$ is $\mathcal{O}(N_{\text{x}}N_{\text{h}})$, the complexity of a GRU cell with the same input size $N_{\text{x}}$ and hidden size $N_{\text{h}}$ is $\mathcal{O}(N_{\text{x}}^2 + N_{\text{x}}N_{\text{h}})$ \cite{cho2014learning}, and the fact that we use the same number of $N_{\text{h}}$ units for all hidden layers of the policy networks, the complexity to calculate an action for a UAV at each step is given by $\mathcal{O}(H^2N_{\text{h}} + N_{\text{h}}^2)$.

\section{Bandwidth allocation under imperfect CSI}
\label{sec:bandwidth_optimization}
Once the UAVs have learned the trajectory policy, they fly through each cell and hover to collect data using the FDMA protocol. To minimize the hovering time of the UAV, this section optimizes the bandwidth allocation to active SNs considering the practical imperfect CSI. In practice, perfect CSI is infeasible, resulting channel estimation error. Let $\bs{h}^{in}(t), \hat{\bs{h}}^{in}(t)$ be the true and estimated channel coefficients between UAV $n$ and SN $i$, respectively. Under imperfect CSI, we have $\bs{h}^{in}(t)  = \hat{\bs{h}}^{in}(t) + \bs{e}^{in}(t)$, where $\bs{e}^{in}(t)$ is the estimation error that is statistically independent from the estimated channel and its elements are modelled as random variables with zero mean and variance $\sigma^2_e/N_{tx}$. 
Under FDMA transmission mode, there is no interference at the receiver side. The received signal from SN $i$ under MRC receiver at UAV $n$ is given as
\begin{equation}
    \hat{y}^{in}(t) = \sqrt{P_s} \frac{\tilde{\bs{h}}^H}{\|\tilde{\bs{h}}\|}\big(\hat{\bs{h}}^{in}(t)+ \bs{e}^{in}(t) \big)x^i(t) + \frac{\tilde{\bs{h}}^H}{\|\tilde{\bs{h}}\|} \bs{n}^{in}(t)
\end{equation}
where $\tilde{\bs{h}} \triangleq \hat{\bs{h}}^{in}(t)$, $x^i(t)$ is the transmitted symbol with unit average power over the symbol constellation and $\bs{n}^{in}(t)$ is the thermal Gaussian noise. By treating the channel estimation error as noise, the achievable rate for SN $i$ is given by
\begin{equation}
    \label{eq:data_rate_imp_csi}
    \bar{R}^{in}(t) = b^{in}(t)\log_2\Big(1 + \frac{{\|\hat{\bs{h}}^{in}(t)\|}^2 P_s}{P_s\sigma_e^2 + b^{in}(t)N_0}\Big).
\end{equation}

Let $\mathcal{I}_n(t)$ be the set of active SNs of the cell serving by UAV $n$ at time $t$. 
The bandwidth allocation optimization problem for the $n$-th UAV at the time $t$ to minimize the hovering time can be formulated as follows:
\begin{align}
    \min_{\{b^{in}(t) \geq 0\}} \ \max_{i\in\mathcal{I}_n(t)}\frac{D^i}{\bar{R}^{in}(t)};\  \text{s.t.}\ {\sum}_{i\in \mathcal{I}_n(t)} b^{in}(t) \leq B. \label{prob:bandwidth}
\end{align}
To solve problem \eqref{prob:bandwidth}, we introduce an auxiliary variable $\zeta \geq 0$ as the the total hovering time of UAV $n$. The hovering time needs to guarantee that all data is collected, i.e.,
\begin{equation}
    \bar{R}^{in}(t) \geq D^i/ \zeta, \forall i\in \mathcal{I}_n(t). \label{eq:const29}
\end{equation}
These non-convex constraints can be convexified by dividing both sides by a positive $\zeta$. Thus, problem  \eqref{prob:bandwidth} can be reformulated as as follows:
\begin{align}
    \min_{\{b^{in}(t) \geq 0\}, \zeta \geq 0}  \zeta; \,
    \text{s.t.}~ \eqref{eq:const29} \text{ and } {\sum}_{i\in \mathcal{I}_n(t)} b^{in}(t) \leq B. \label{prob:bandwidth_reformulated}
\end{align}
It can be shown that the rate function in \eqref{eq:data_rate_imp_csi} is a concave function with respect to variable $b^{in}(t)$, though the proof is omitted due to space limitation. Thus, problem \eqref{prob:bandwidth_reformulated} is a convex optimization problem with a linear objective function and convex constraints, and can be efficiently solved by standard methods, e.g., interior point. 
Since this problem only depends on the information at the current cell $c$, it can be solved to optimality  based on local observation of the UAV at each hovering point.

\section{Simulation Results and Discussions}
\label{sec:simulation}
In this section, we evaluate the performance of the proposed method in solving problem~\eqref{prob:P}.
We first describe the simulation setups and baseline algorithms. Performance comparisons and analyses will be discussed subsequently.

\subsection{Simulation Setups} \label{sec: simulation setup}
The monitored area is divided into a grid of cells, each with a size of 50m$\times$50m. To evaluate the scalability of the algorithms, we use different grid sizes, including 8$\times$8, 10$\times$10, 15$\times$15, and 20$\times$20 cells. Additionally, we assess the algorithms with varying numbers of UAVs, ranging from 1 to 6. Since the state space grows exponentially with the number of cells, and adding more UAVs directly amplifies the nonstationarity of the environment, larger grid sizes and more UAVs make it increasingly difficult for the algorithms to learn and make effective decisions. As sensors are often deployed non-uniformly in practice, with a higher concentration around targets, we randomly divide the cells into two groups, including sparse cells and dense cells with ratio of 7:3, respectively.
The number of SNs in each cell is generated using Poisson distribution Pois($\lambda$), where $\lambda = 10$ for dense cells and $\lambda = 1$ for sparse cells. The SNs are then uniformly placed within each cell, as illustrated in Fig.~\ref{fig:topology}. To generate data availability for collection, we first generate the number of cells containing data following Poisson distribution with the mean value $\lambda = \phi H^2$, where $\phi$ is used to control the density of data demands. Once the number of cells containing data is determined, their locations are randomly assigned. The data size of active SNs is randomly generated from $[0.1, 1.0]$ Mbits, following some stable distribution. We set the transmit power of SNs $P=10$ dBm, attenuation due to NLoS $\beta = 0.2$, 
total bandwidth $B=1$ MHz, noise power spectral density $N_0=-150$ dBm, and the pathloss exponent $\eta=2.6$.
The UAVs are assumed to fly at $h=100$ m altitude.
For the environment parameters, we set $a = 11.95, b=0.14, A_1 = 1.0,$ and $A_2 = 4.39$, following the settings in \cite{wang2023cooperative, you20193d}.
Initial locations and final destinations of all UAVs are set at the same position, at the center of the bottom-left cell depicted in Fig.~\ref{fig:topology}. 
The observable region of each UAV is an area of 3$\times$3 cells centered at its location. All UAVs are trained with energy budgets of $E^n_{\max} = 1000$ kJ each. Unless otherwise indicated, the following default parameters are used. There are $N = 2$ UAVs with different velocities of 5 m/s and 10 m/s, data density $\phi = 0.3$, and inter-UAV communication range is set to 200 m. All other UAVs' parameters are retained as in~\cite{zeng2019energy}.

\begin{figure}[t]
     \centering
     \begin{subfigure}{\columnwidth}
         \centering
         \includegraphics[width=\textwidth]{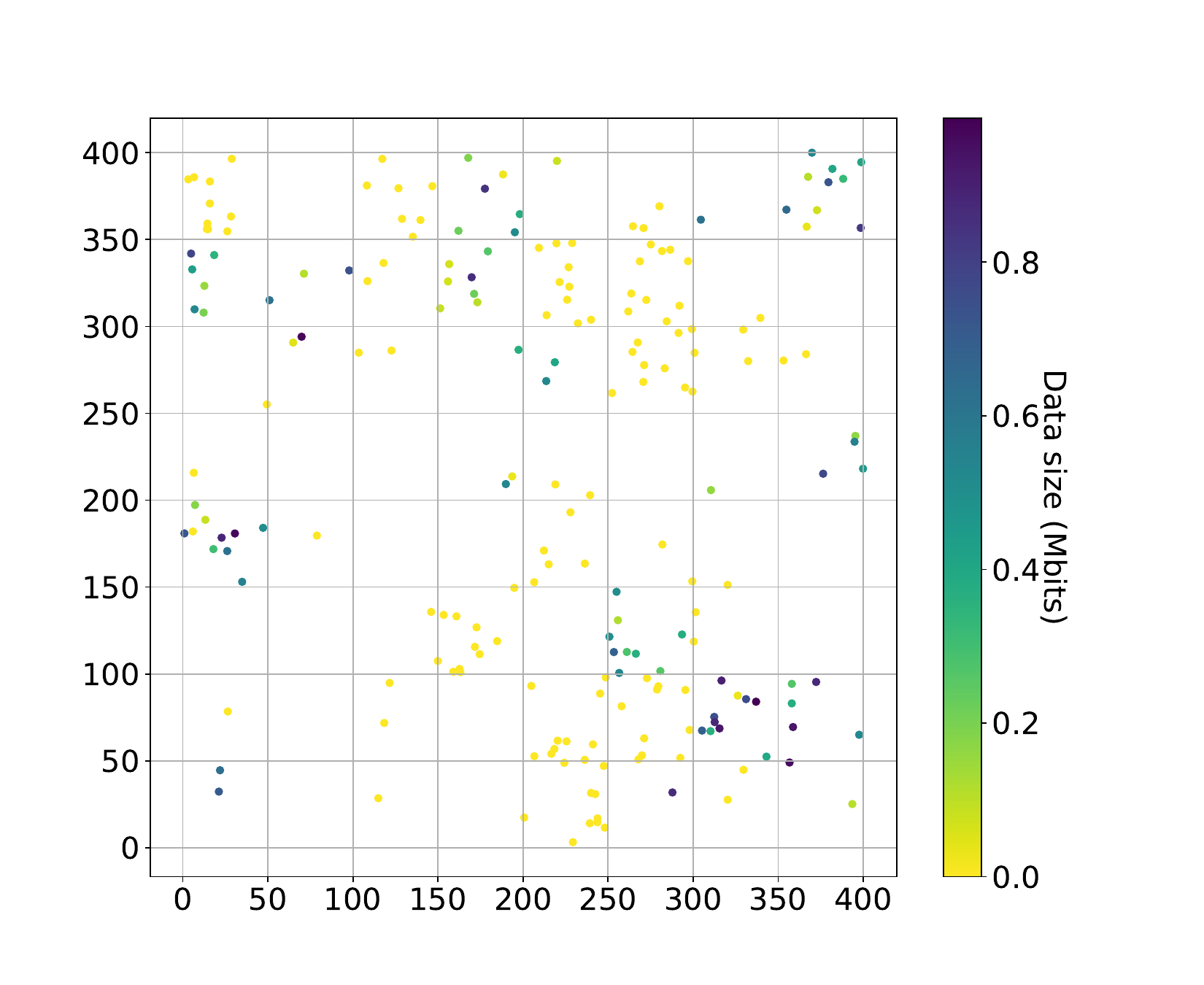}
     \end{subfigure}
     \vspace{-0.5cm}
     \caption{Distribution of SNs over the 8$\times$8 cell collecting area. The colors of SNs represent an example of data collection demands with available data size ranging from 0 to 1 Mbits.}
     \label{fig:topology}
     \vspace{-0.5cm}
\end{figure}

The proposed AQMIX algorithm is compared with following reference schemes:
\begin{itemize}
    \item Independent learning (AIQL): this is a fully decentralized baseline obtained by removing the mixing network from the architecture of AQMIX, which is analogous to the paradigm used in \cite{bayerlein2021multi}. The policies are learned only based on local action-observation histories without using the global state. 
    \item QMIX \cite{wang2023cooperative}: this algorithm was designed only for synchronous environments. To adapt it in our asynchronous setting, we employ a synchronous training - asynchronous deployment mechanism as follows. During training, if a UAV completes its action before others, it waits until all UAVs complete their actions, allowing them to calculate their new actions simultaneously. Since all learned policies can be extracted and executed distributively, we then deploy them in our asynchronous environment without inter-UAV synchronization. This baseline enables us to evaluate both the learning performance of the proposed algorithm relative to the original QMIX, as well as the performance of the existing synchronous MARL algorithm in asynchronous environments.
    \item Heuristic (HERT): a very naive but feasible solution to~\eqref{prob:P} is to partition the area into multiple sectors and assign each part to one UAV. UAVs then fly over their assigned sub-areas, exploring and collecting data cell by cell. 
\end{itemize}

To prevent algorithms from learning trivial solutions by memorizing specific trajectories, the data collection demands are randomly regenerated at the beginning of each episode, encouraging agents to learn more general behaviors.
The hyperparameters are hand-tuned for reasonable performance and are used across all methods. Specifically, we use a learning rate of 5e-5, a discount factor of $\gamma = 0.99$, a batch size of 32, a replay buffer size of $1e6$ samples, and a target network update rate of 1e-2. We set the number of neurons to 256 in all hidden layers of the policies and the mixing network. The state downsampling is performed with kernel size of 3$\times$3. The reward function is scaled down by the maximum generated data size (which is 1.0 Mbits) to avoid numerical issues. Besides, we adopt the equal bandwidth allocation  during training to minimize the training time, and only perform bandwidth optimization during the testing phase.

\subsection{Performance Comparisons and Analyses}
\label{sec:result - performance comparison}
\subsubsection{Learning performance}
\begin{figure*}[ht]
     \centering
     \begin{subfigure}{0.32\textwidth}
         \centering
         \includegraphics[width=\textwidth]{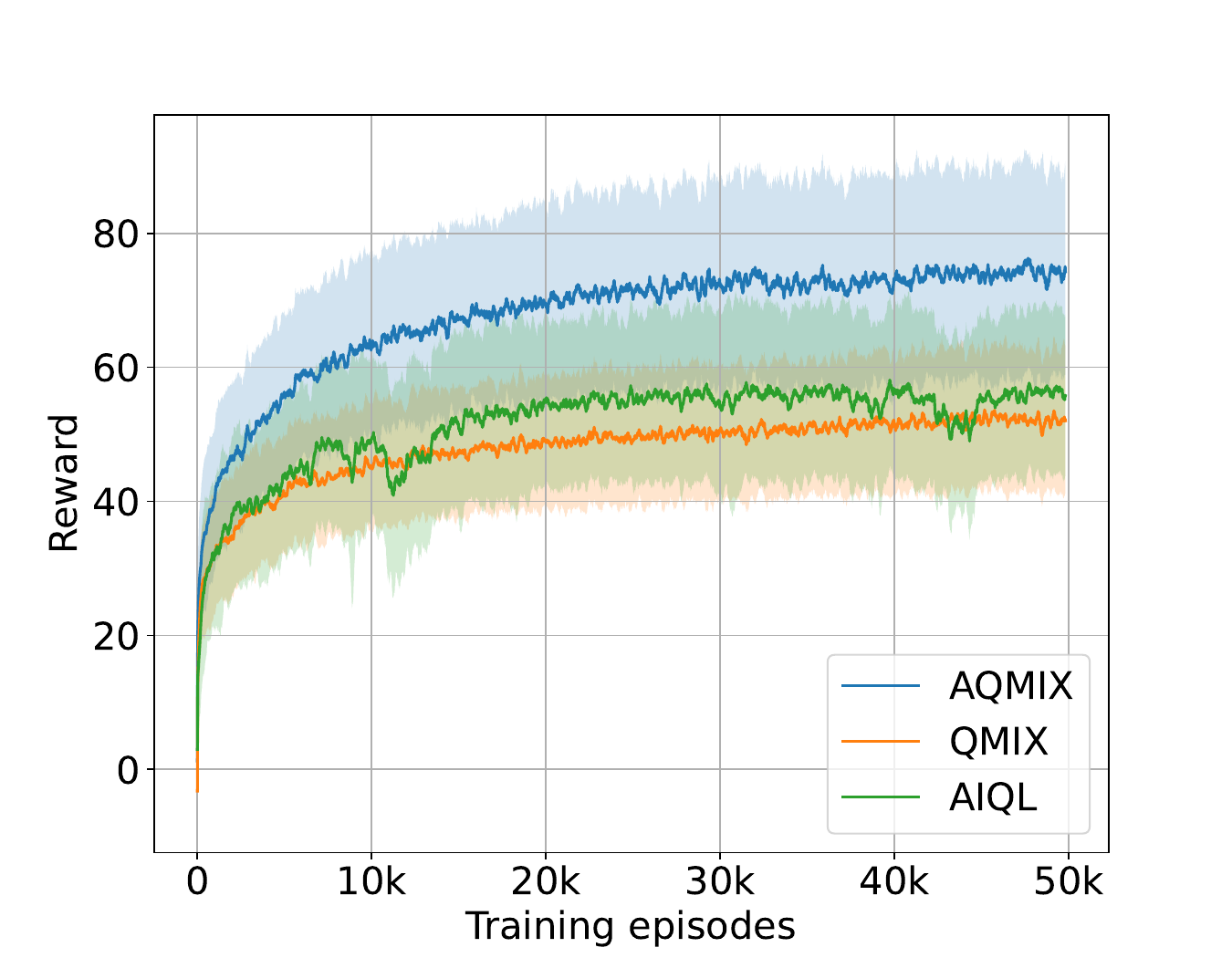}
         \caption{10$\times$10 cells}
     \end{subfigure}
     \hfill
     \begin{subfigure}{0.32\textwidth}
         \centering
         \includegraphics[width=\textwidth]{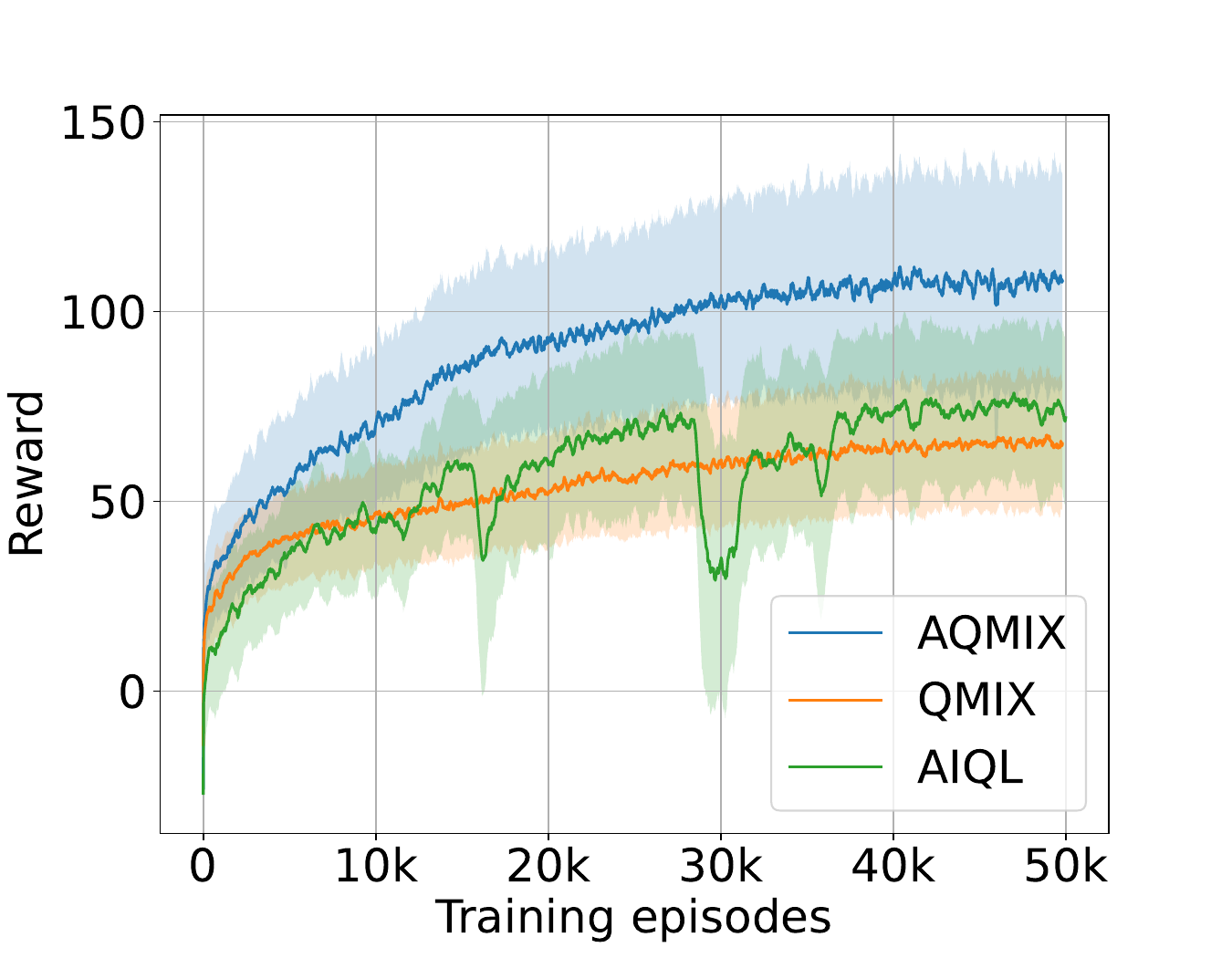}
         \caption{15$\times$15 cells}
     \end{subfigure}
     \hfill
     \begin{subfigure}{0.32\textwidth}
         \centering
         \includegraphics[width=\textwidth]{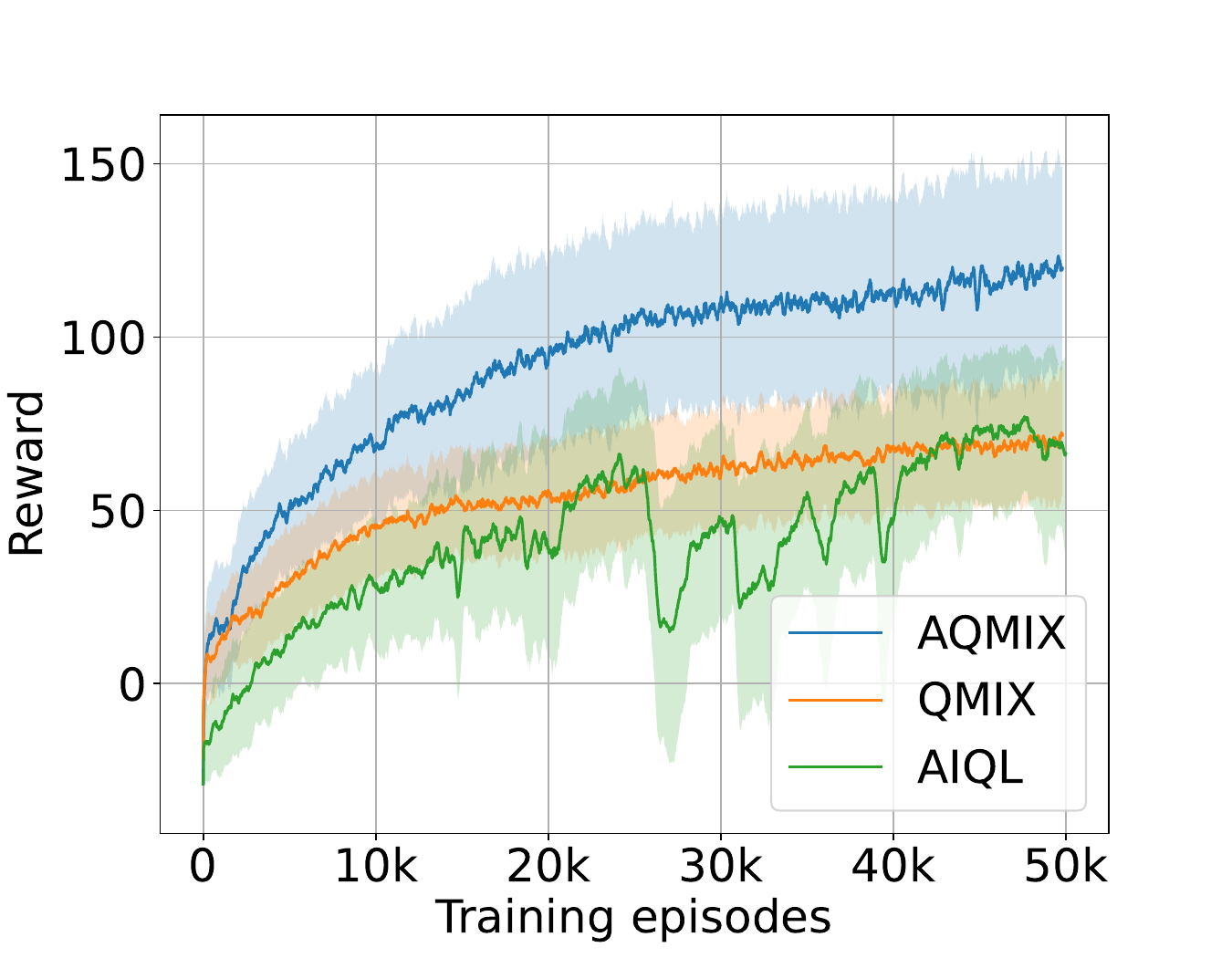}
         \caption{20$\times$20 cells}
     \end{subfigure}
     \caption{Total reward per episode during training on different network sizes.}
     \label{fig:learning_curve_netsize}
     \vspace{-0.4cm}
\end{figure*}

\begin{figure*}[ht]
     \centering
     \begin{subfigure}{0.24\textwidth}
         \centering
         \includegraphics[width=\textwidth]{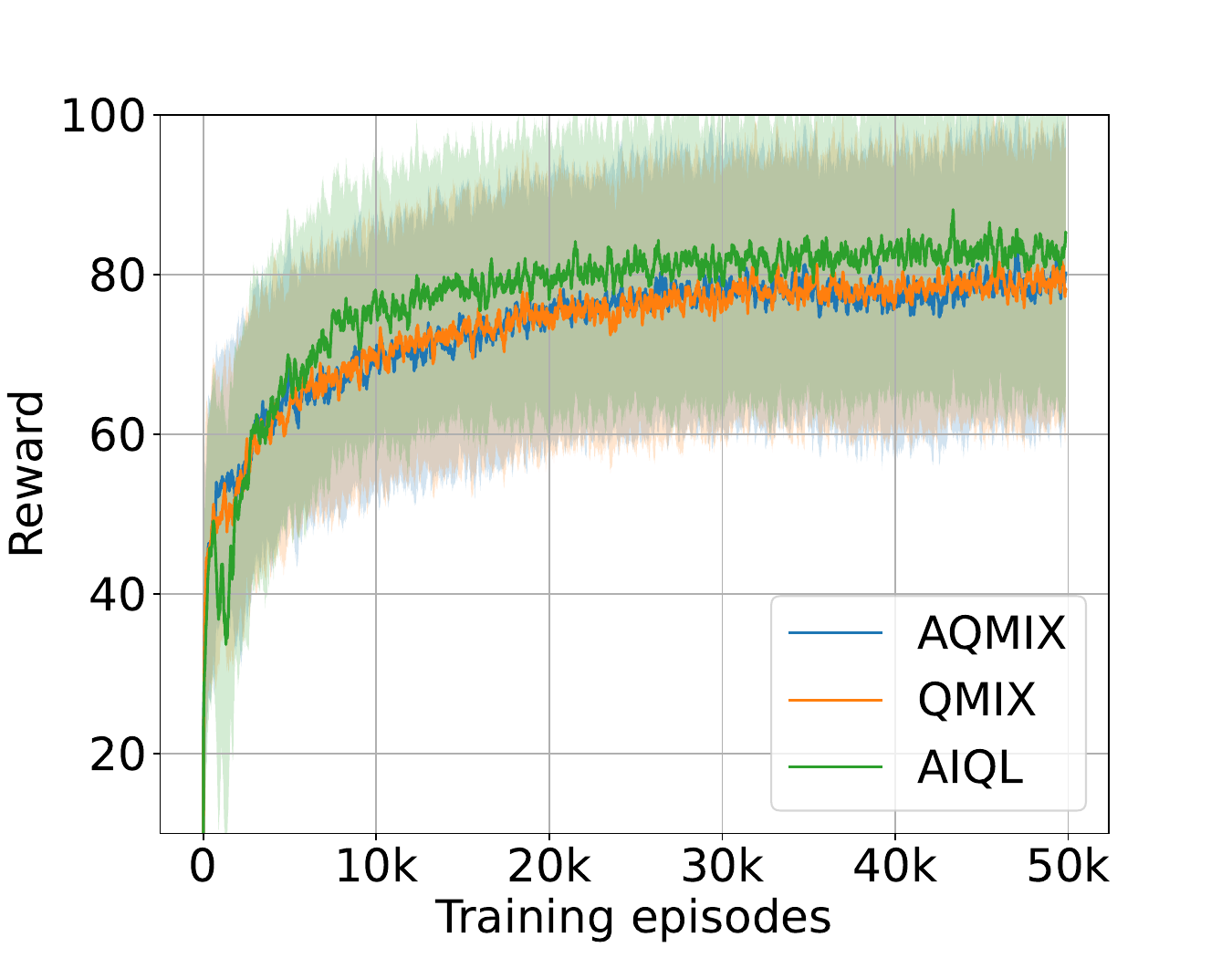}
         \caption{1 UAV}
     \end{subfigure}
     \hfill
     \begin{subfigure}{0.24\textwidth}
         \centering
         \includegraphics[width=\textwidth]{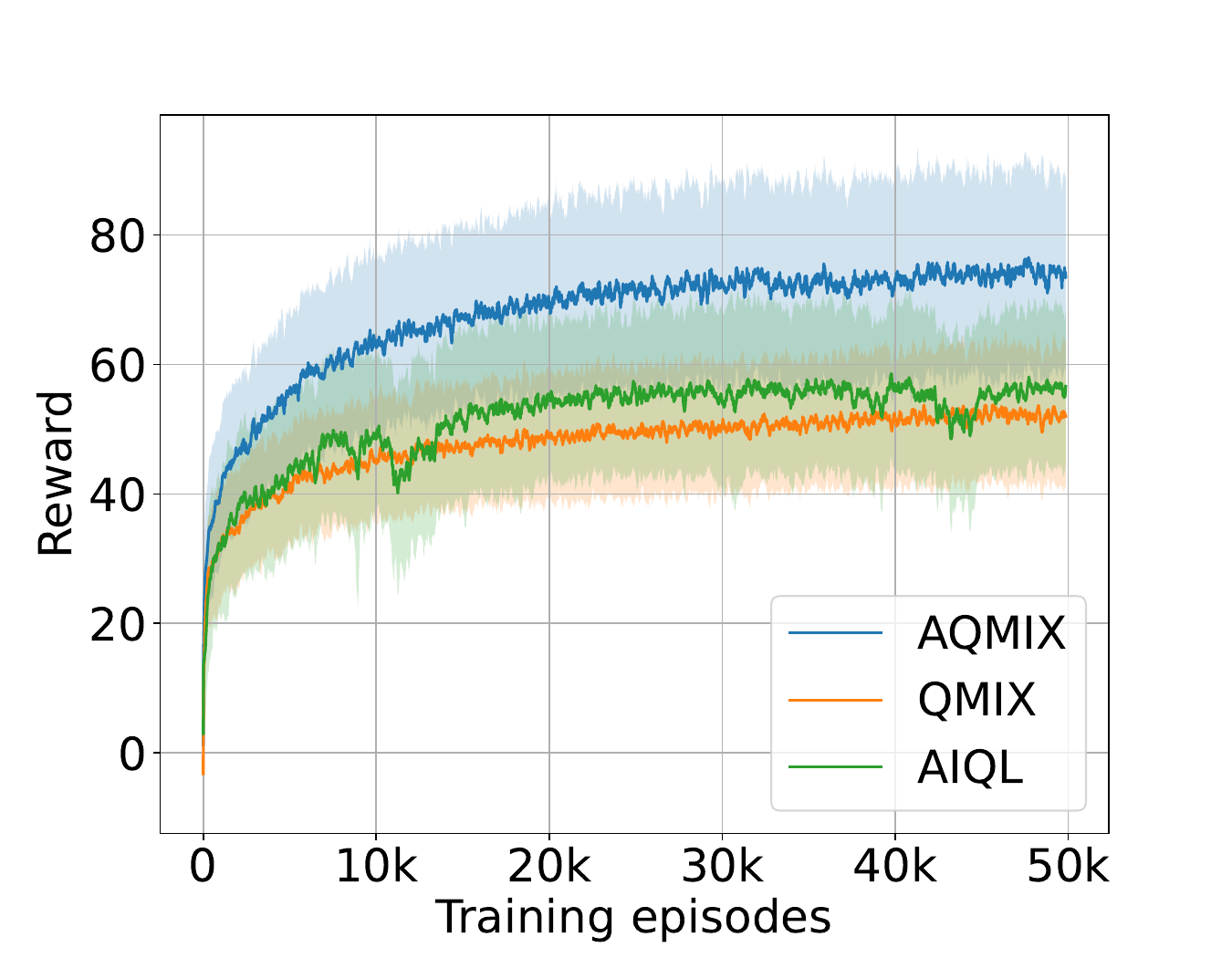}
         \caption{2 UAVs}
     \end{subfigure}
     \hfill
     \begin{subfigure}{0.24\textwidth}
         \centering
         \includegraphics[width=\textwidth]{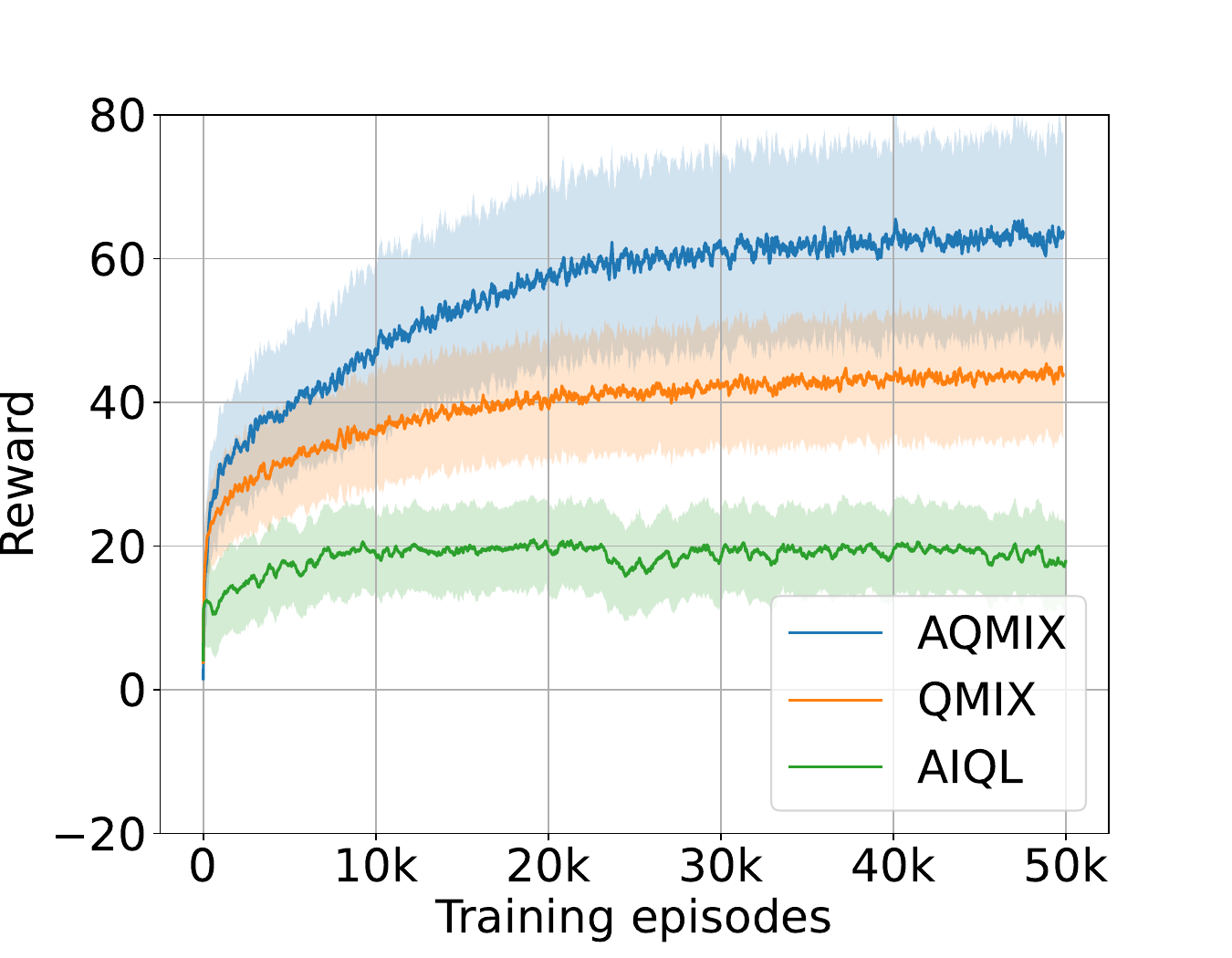}
         \caption{4 UAVs}
     \end{subfigure}
     \hfill
     \begin{subfigure}{0.24\textwidth}
         \centering
         \includegraphics[width=\textwidth]{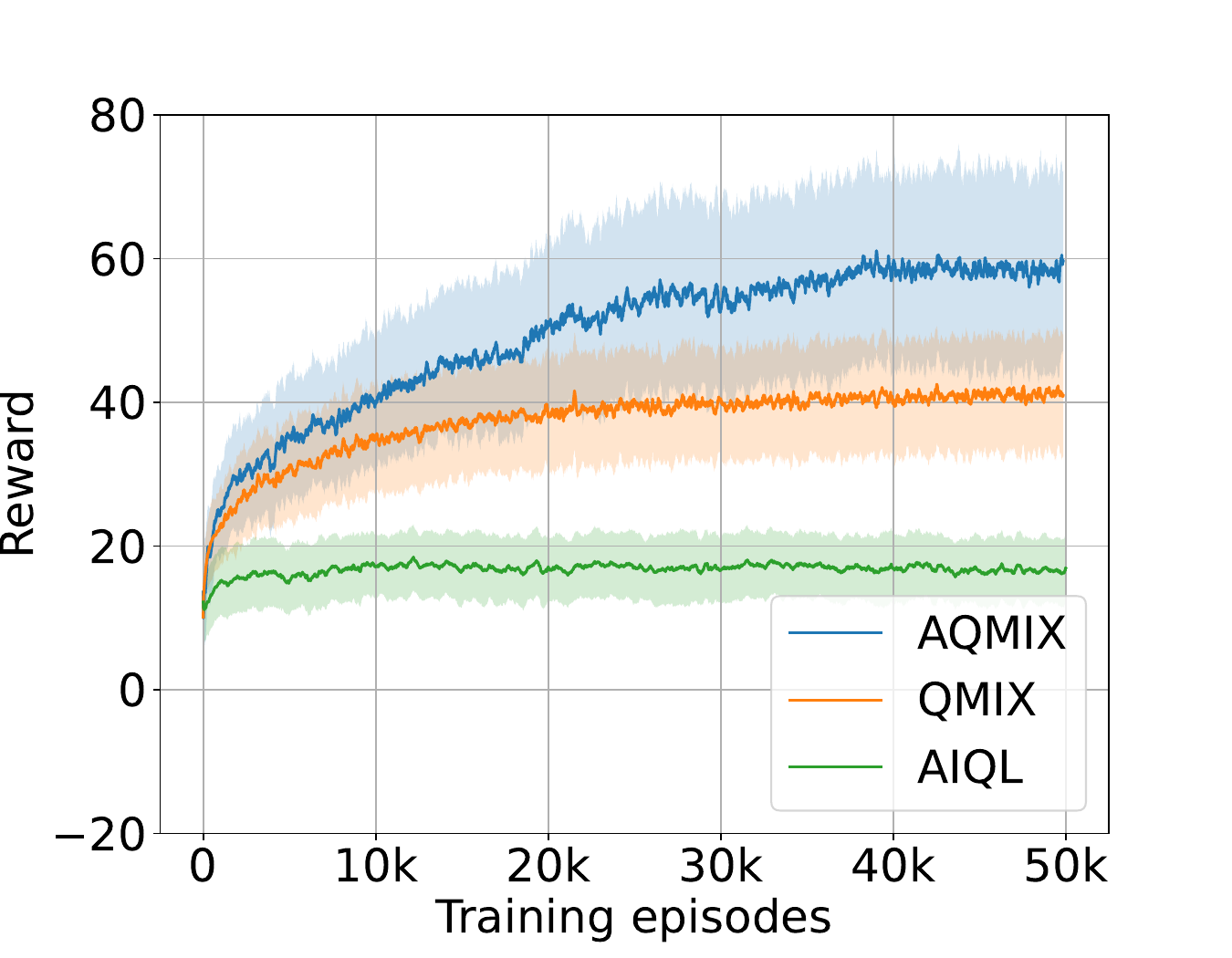}
         \caption{6 UAVs}
     \end{subfigure}
     \caption{Total reward per episode during training with different numbers of UAVs.}
     \label{fig:learning_curve_nb_uavs}
     \vspace{-0.5cm}
\end{figure*}
We first examine the learning performance of all learning-based solutions on different numbers of cells. Fig.~\ref{fig:learning_curve_netsize} plots the total reward per episode during training, where we train two UAVs to collect data over areas of 10$\times$10, 15$\times$15, and 20$\times$20 cells. In this figure, each line represents the average values over 10 different trainings and the shaded areas represent the standard deviation. It can be seen that AQMIX clearly outperforms QMIX and AIQL across all network sizes. The superiority of AQMIX over QMIX can be explained by the fact that QMIX requires UAVs to wait for synchronization with others before making decisions, resulting in inefficient use of hovering energy. These results demonstrate that we have successfully extended QMIX, preserving its advantages in an asynchronous environment. While AQMIX and QMIX exhibit consistent convergence trends during learning, the results of AIQL show significant variance in learning performance. These performance fluctuations can be attributed to the absence of cooperation mechanisms between agents, which is a common limitation of independent learning methods.

Fig.~\ref{fig:learning_curve_nb_uavs} plots the total reward per episode during training, where we train $1, 2, 4$, and $6$ UAVs to collect data over areas of 10$\times$10 cells. As expected, sub-fig.~\ref{fig:learning_curve_nb_uavs}a shows that when there is only one UAV, all algorithms exhibit the same learning performance. This is because, in this case, asynchronous and synchronous environments are identical, and all learning algorithms function as single-agent RL algorithms. As the number of UAVs increases, the gaps between algorithms become clearer, with AQMIX standing out as the best solution. Notably, the performance of AIQL significantly drops as the number of UAVs increases. This is because a greater number of UAVs leads to a higher likelihood of inter-UAV communication, i.e., more frequent synchronization of information among UAVs. In the case of AQMIX and QMIX, this synchronization is beneficial due to their cooperative mechanisms. However, due to the absence of cooperation management mechanisms in AIQL, these synchronizations inadvertently trigger unexpected interactions between UAVs, exacerbating the inherent instability of the algorithm.

Overall, AQMIX has proven to be the most effective learning solution in asynchronous environments, as demonstrated by its stable and highly competitive performance. In contrast, AIQL fails to leverage the benefits of increasing the number of UAVs, as well as the potential of inter-UAV communication.

\subsubsection{Robustness of learned policies}
to evaluate the robustness of learned policies, we select the best policy obtained by each method\footnote{For the three learning algorithms, evaluating a massive number of policies generated during training is too costly, and thus we only store and evaluate a policy after every 100 episodes. Each of these policies is tested on thirty different scenarios of data collection demands, and the best policy is determined by its highest energy efficiency.}, and then evaluate these policies on a same set of 1000 different scenarios of data generations. In the next two figures, we report the testing results on the test benchmark of 10$\times$10 cells.

\begin{figure*}[ht]
     \centering
     \begin{subfigure}{0.32\textwidth}
         \centering
         \includegraphics[width=\textwidth]{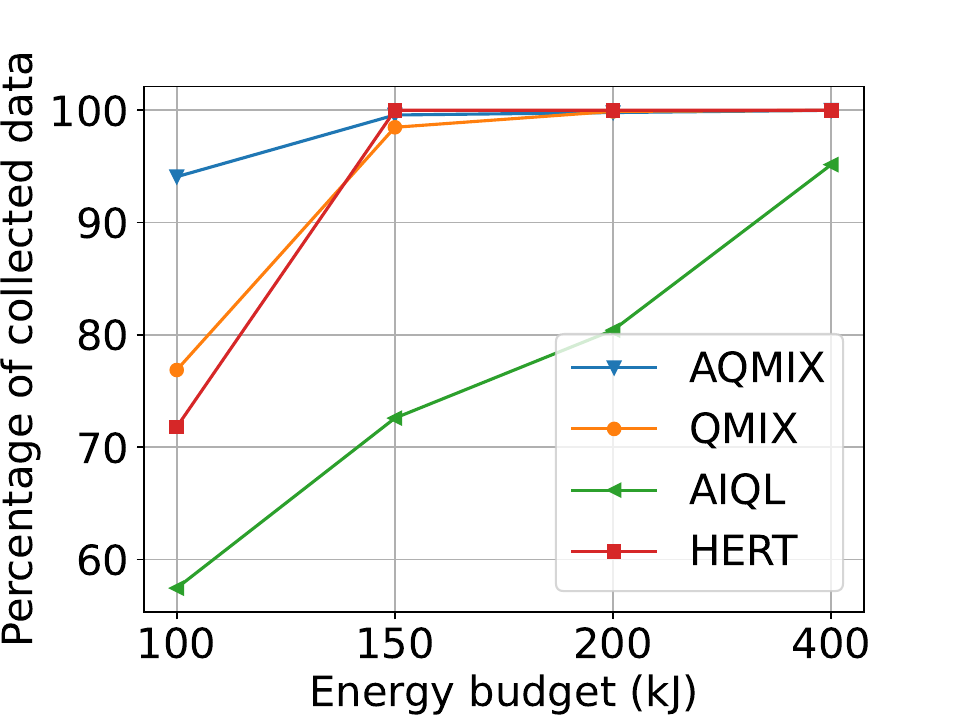}
         \caption{Percentage of collected data}
     \end{subfigure}
     \hfill
     \begin{subfigure}{0.32\textwidth}
         \centering
         \includegraphics[width=\textwidth]{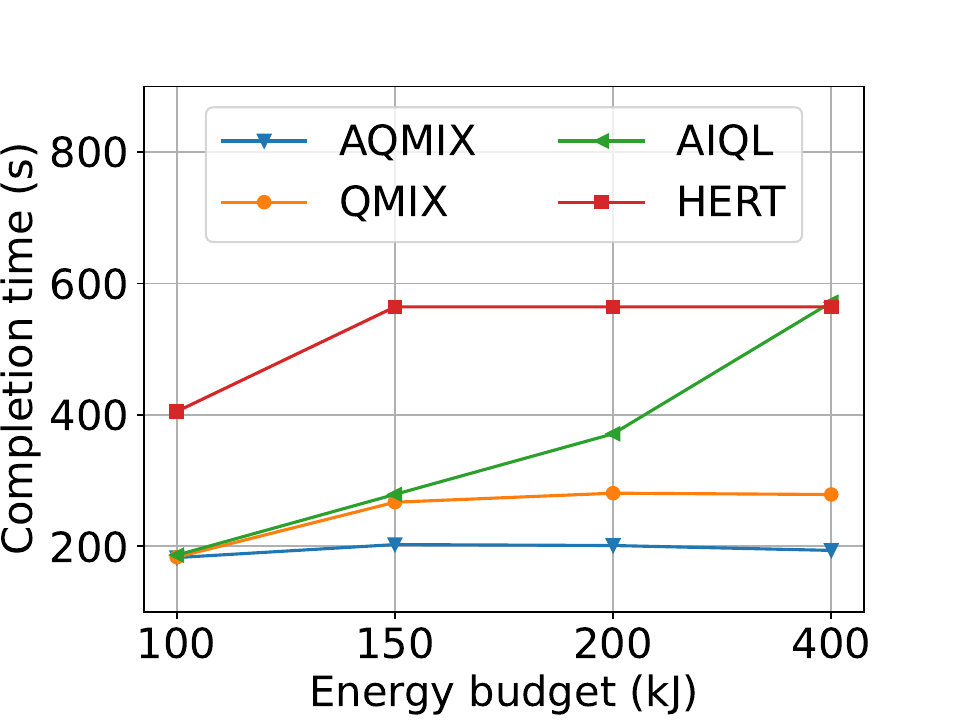}
         \caption{Completion time}
     \end{subfigure}
     \hfill
     \begin{subfigure}{0.32\textwidth}
         \centering
         \includegraphics[width=\textwidth]{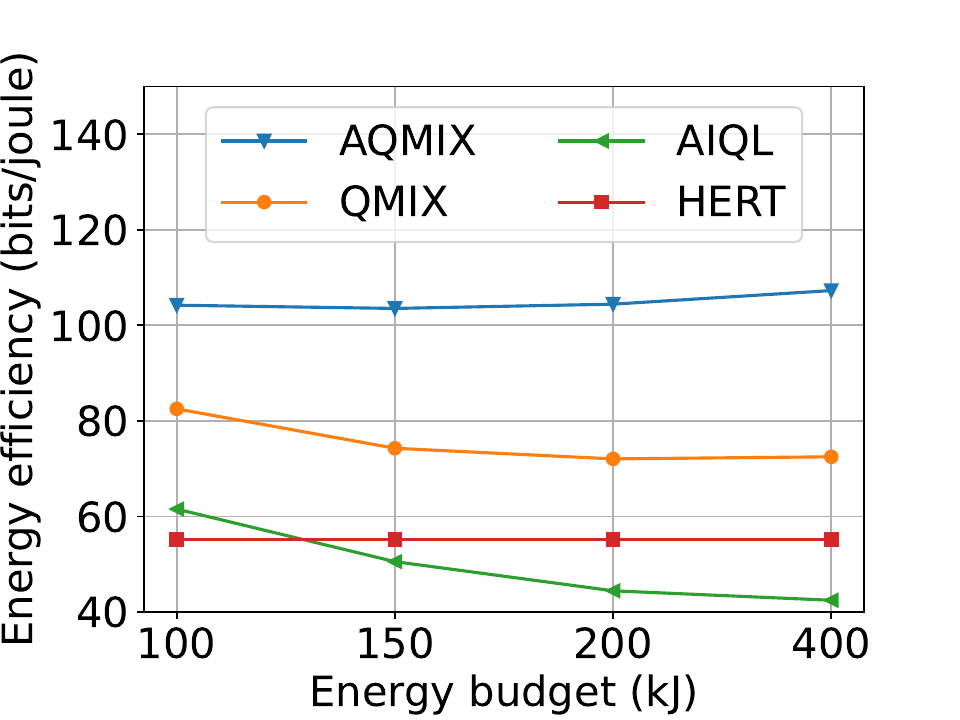}
         \caption{Energy efficiency}
     \end{subfigure}
     \caption{Testing results on different levels of energy budget.}
     \label{fig:energy_robustness}
     \vspace{-0.5cm}
\end{figure*}

Fig.~\ref{fig:energy_robustness} plots the average results over 1000 testing scenarios, with the energy budget for each UAV varying from 100kJ to 400kJ. The methods are evaluated under three criteria: the percentage of collected data (Fig.~\ref{fig:energy_robustness}a), the mission completion time (Fig.~\ref{fig:energy_robustness}b), and energy efficiency (Fig.~\ref{fig:energy_robustness}c). It’s worth noting that all tested policies were trained only with $E^n_{\max} = 1000$kJ.
Sub-fig.~\ref{fig:energy_robustness}a shows that 150kJ is sufficient for AQMIX, QMIX, and HERT to collect all data, while this number for AIQL is above 400kJ.
Having the energy budget exceed this required level does not have a significant impact on the solution quality, as shown in sub-fig.~\ref{fig:energy_robustness}b and sub-fig.~\ref{fig:energy_robustness}c, where the completion time and energy efficiency remain unchanged as the energy budget increases.
Overall, the proposed AQMIX algorithm provides outstanding performance in all three criteria compared to the benchmark schemes, demonstrated by its higher energy efficiency and the ability to collect the most data with the least time and energy consumption. The results also suggest that mission completion time and total collected data can be optimized indirectly by maximizing energy efficiency.

\begin{figure}[t]
     \centering
     \begin{subfigure}{0.49\columnwidth}
         \centering
         \includegraphics[width=\textwidth]{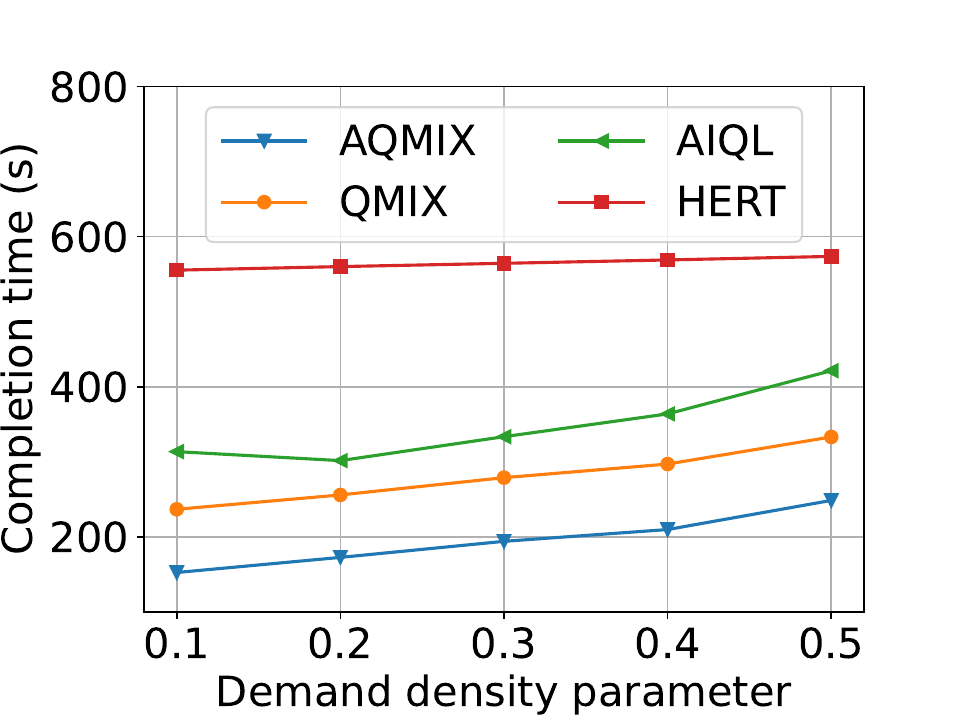}
         \caption{Completion time}
     \end{subfigure}
     \hfill
     \begin{subfigure}{0.49\columnwidth}
         \centering
         \includegraphics[width=\textwidth]{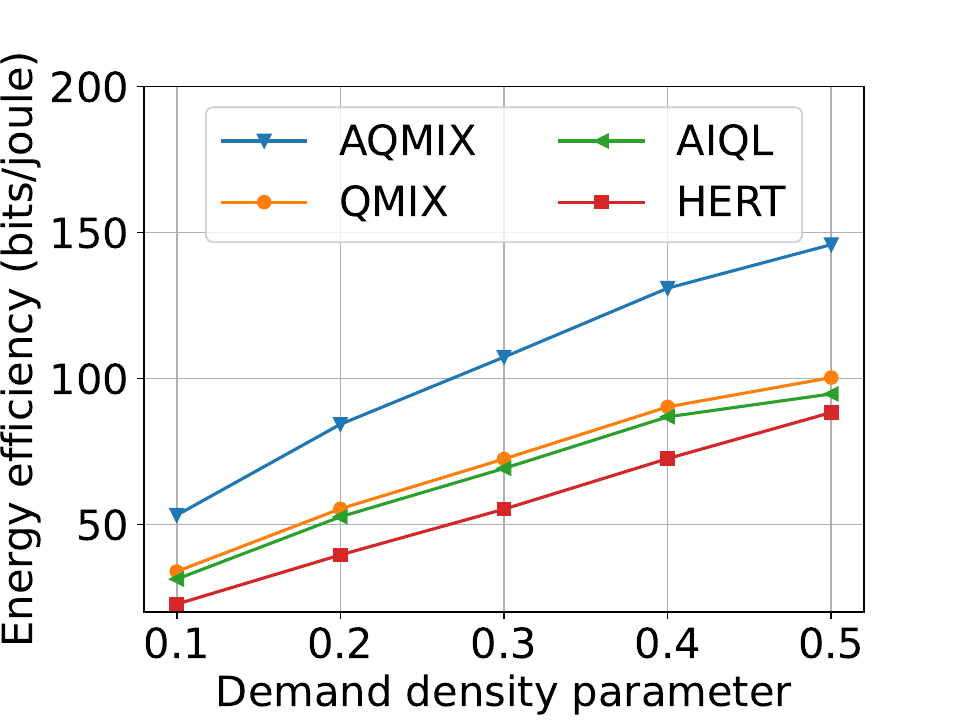}
         \caption{Energy efficiency}
     \end{subfigure}
     \caption{Testing result on different density levels of data collection demands.}
     \label{fig:demand_density}
     \vspace{-0.4cm}
\end{figure}
Fig.~\ref{fig:demand_density} shows the average completion time and energy efficiency achieved by all schemes, where the data density parameter $\phi$ is varied between 0.1 and 0.5. It’s also worth noting that all tested policies were trained only at $\phi = 0.3$. As shown in Fig.~\ref{fig:demand_density}, both the completion time (sub-fig.~\ref{fig:demand_density}a) and energy efficiency (sub-fig.~\ref{fig:demand_density}b) tend to increase as the amount of data to collect increases. This is because, as more data is available in the same area, the energy cost per unit of data collected decreases. This leads to improved energy efficiency, as the UAVs can collect more data without a proportional increase in energy consumption. Similar to the previous experiment, AQMIX consistently outperforms the benchmark schemes.

In summary, Fig.~\ref{fig:energy_robustness} and Fig.~\ref{fig:demand_density} demonstrate that the proposed AQMIX algorithm not only outperforms other reference schemes, but also exhibits robustness and strong generalization ability across varying conditions of data availability and UAV energy budgets.

\subsubsection{Impacts of the number of UAVs} 
\begin{figure}[t]
     \centering
     \begin{subfigure}{0.49\columnwidth}
         \centering
         \includegraphics[width=\textwidth]{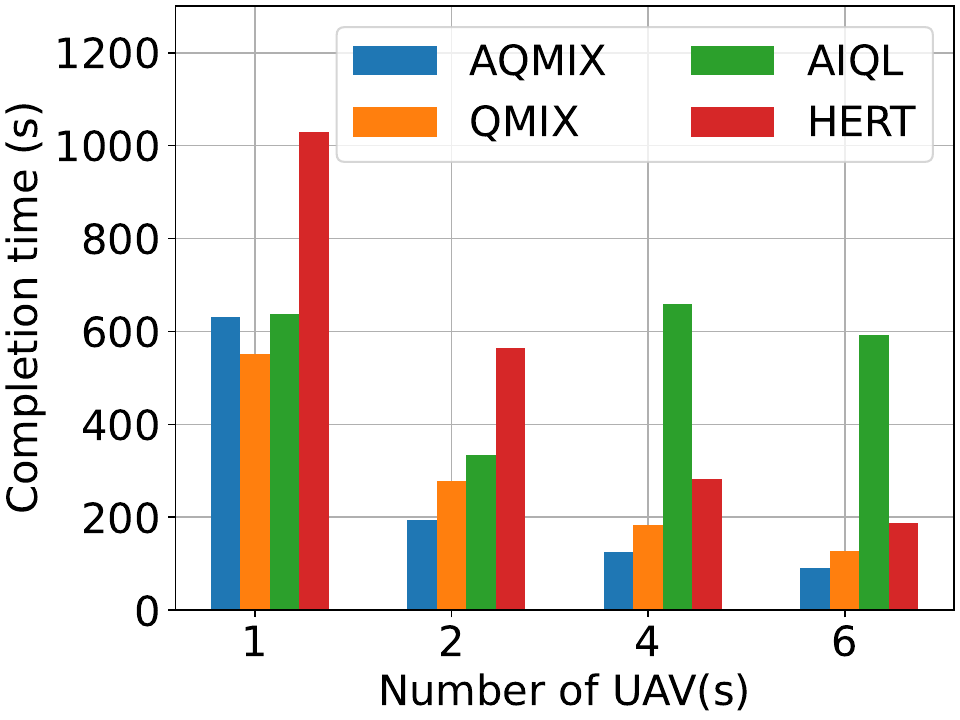}
         \caption{Completion time}
     \end{subfigure}
     \hfill
     \begin{subfigure}{0.49\columnwidth}
         \centering
         \includegraphics[width=\textwidth]{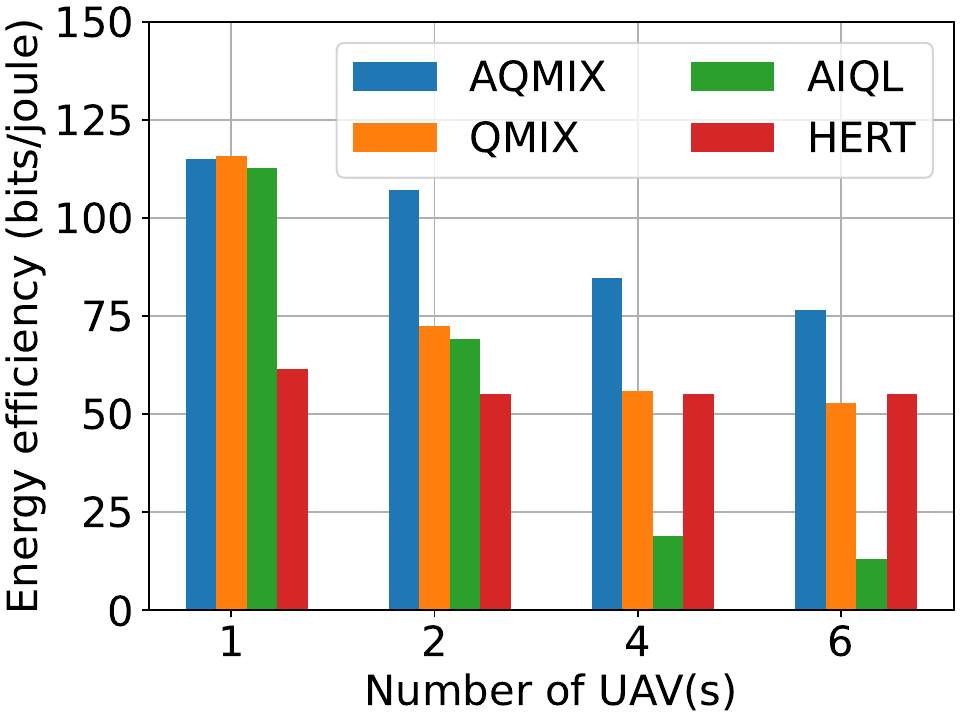}
         \caption{Energy efficiency}
     \end{subfigure}
     \caption{Testing results of policies trained with different numbers of UAVs.}
     \label{fig:nb_uavs}
\end{figure}
Fig.~\ref{fig:nb_uavs} plots the testing results where policies are trained with different numbers of UAVs to collect data over areas of 10$\times$10 cells.
In this figure, the results of AQMIX and QMIX reveal a trade-off between completion time and energy efficiency. In particular, increasing the number of UAVs significantly reduces the completion time (sub-fig. \ref{fig:nb_uavs}a) but also decreases the energy efficiency (sub-fig. \ref{fig:nb_uavs}b). While the reduction in completion time is straightforward, the decline in energy efficiency can be attributed to the overlapping UAV trajectories as the number of UAVs increases. Additionally, having more UAVs makes the environment more challenging to learn due to increased noise and uncertainty in the UAVs' observations.
Notably, increasing the number of UAVs from one to two reduces the completion time by approximately 75\% in AQMIX, while efficiency decreases only slightly, by less than 10\%. This trade-off ratio is significantly better than that of QMIX, which sacrifices 30\% of energy efficiency for a 50\% reduction in completion time.
In the case of HERT, the completion time decreases proportionally with the number of UAVs, while the energy efficiency remains unchanged. This is because each UAV is assigned a specific, non-overlapping area, meaning the total energy consumption by all UAVs remains roughly the same as in the single-UAV case. Finally, AIQL demonstrates poor performance when the number of UAVs increases, since this algorithm struggles to learn when there are more than two UAVs, as discussed previously.
Overall, AQMIX achieves the lowest completion time and the highest energy efficiency in all cases.

\subsubsection{Impacts of inter-UAV communication}

\begin{figure}[ht]
     \centering
     \begin{subfigure}{0.49\columnwidth}
         \centering
         \includegraphics[width=\textwidth]{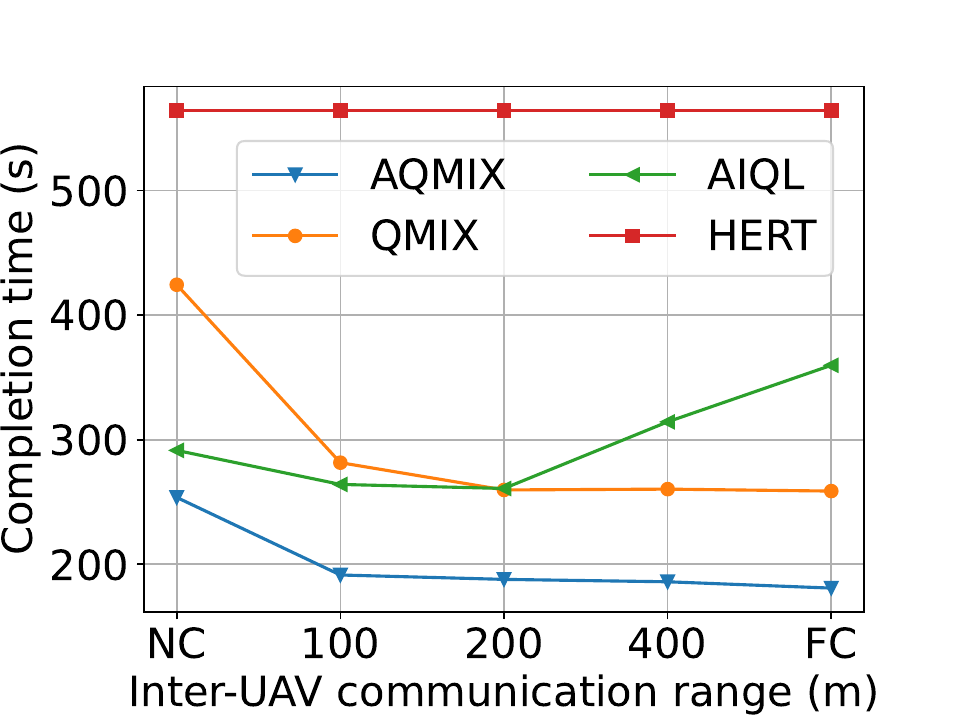}
         \caption{}
     \end{subfigure}
     \hfill
     \begin{subfigure}{0.49\columnwidth}
         \centering
         \includegraphics[width=\textwidth]{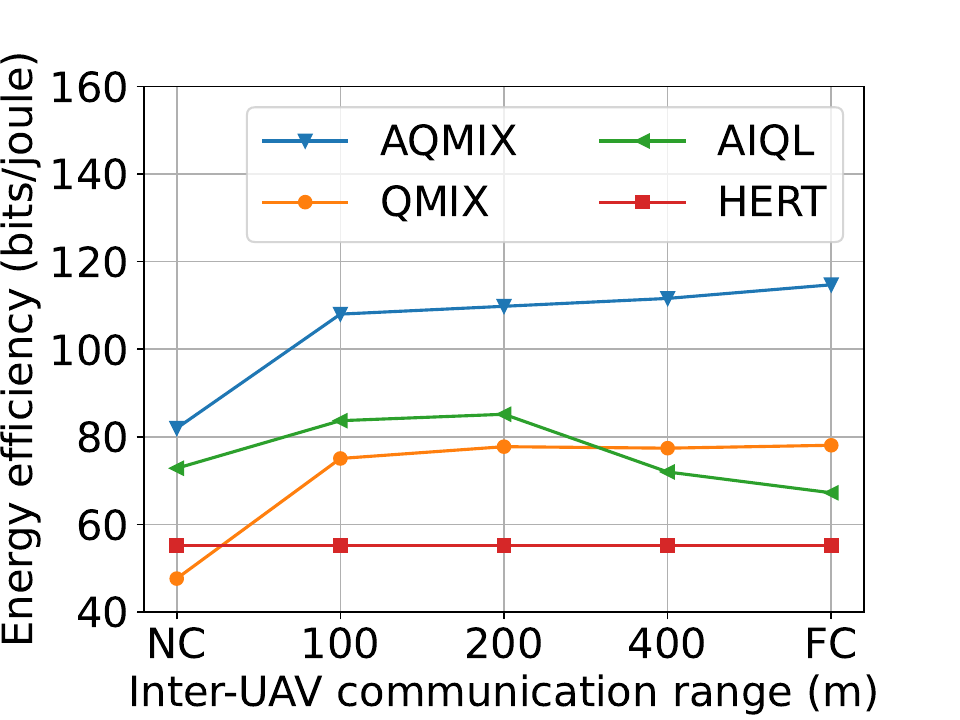}
         \caption{}
     \end{subfigure}
     \caption{Testing results of policies trained with different inter-UAV communication ranges. ``NC" - no communication between UAVs; ``FC" - full-communication to all UAVs.}
     \label{fig:com}
     \vspace{-0.4cm}
\end{figure}
Fig.~\ref{fig:com} plots the completion time and energy efficiency as functions of the inter-UAV communication range, which varies from no communication between UAVs to full communication, where UAVs can exchange information any time and from any position. Note that inter-UAV communication does not affect the performance of the heuristic method, as UAVs in this approach operate independently within preassigned collecting sub-areas. The results in Fig.~\ref{fig:com} show that extending the communication range enhances the performance of QMIX and AQMIX, demonstrated by the reduced completion time and the increased energy efficiency. This improvement occurs because a longer communication range allows UAVs to better synchronize their observations and assist each other in verifying mission completion, thereby reduces the total energy consumption. In contrast, increasing the communication range does not benefit AIQL and may even amplify non-stationarity in its learning process. These findings further confirm the effectiveness of the cooperation strategies learned by QMIX and AQMIX. Overall, AQMIX proves to be the superior approach.
\begin{figure}[h]
    \centering
    \includegraphics[width=0.6\linewidth]{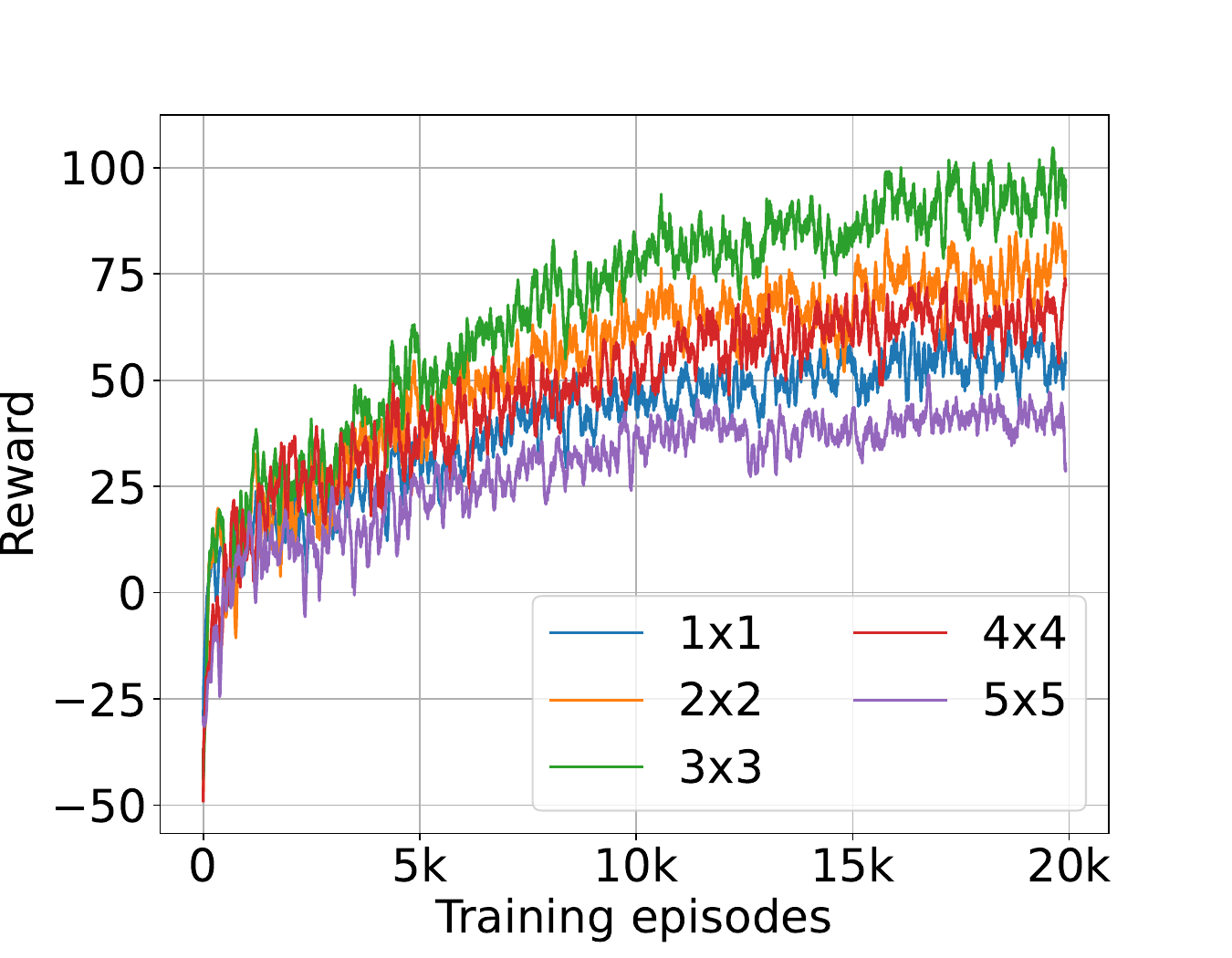}
    \caption{Learning curves of AQMIX with different kernel sizes used in state downsampling.}
    \label{fig:kernel_size}
\end{figure}

\subsubsection{Impacts of the state downsampling}
In Fig.~\ref{fig:kernel_size}, we report the results on a 15×15 grid with 2 UAVs, testing kernel sizes of 1×1 (no downsampling), 2×2, 3×3, 4×4, and 5×5. The results show that very small or very large kernels either overload the models with excessive detail or discard important spatial information, both leading to degraded policy performance. Based on these observations, a practical guideline for selecting the kernel size is to choose a value no larger than the agent’s observable region, ideally matching the size of this region. In our setup, each UAV observes a 3×3 window, and a 3×3 kernel provided the best trade-off between state compactness and information retention.

\subsubsection{Impacts of channel estimation error and bandwidth optimization} Fig.~\ref{fig:bandwidth_opt} plots the total time spent by all UAVs for collecting data and the energy efficiency of the best policy learned by AQMIX in the testing phase on an area of $8\times 8$ cells. The proposed optimal bandwidth allocation is compared with the equal bandwidth allocation counterpart, where each active SN is assigned an equal frequency bandwidth. The advantage of the proposed bandwidth optimization is clearly demonstrated via superior performance compared to the equal allocation scheme. Furthermore, the robustness of the proposed optimization is also confirmed via the operation under different CSI errors. 

\begin{figure}[t]
     \centering
     \begin{subfigure}{0.49\columnwidth}
         \centering
         \includegraphics[width=\textwidth]{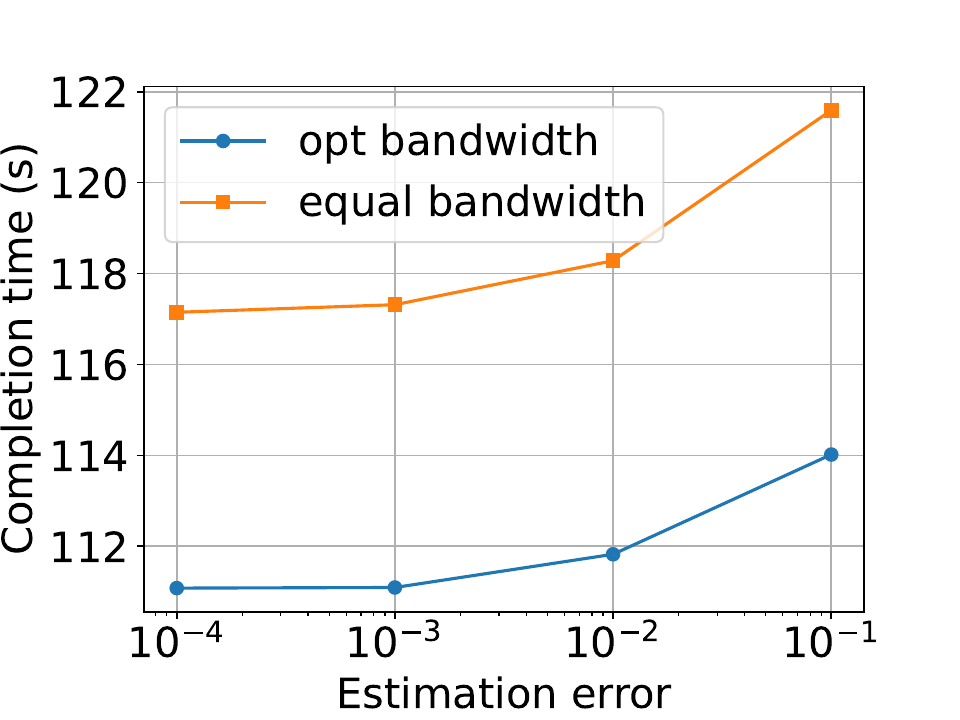}
         \caption{}
     \end{subfigure}
     \hfill
     \begin{subfigure}{0.49\columnwidth}
         \centering
         \includegraphics[width=\textwidth]{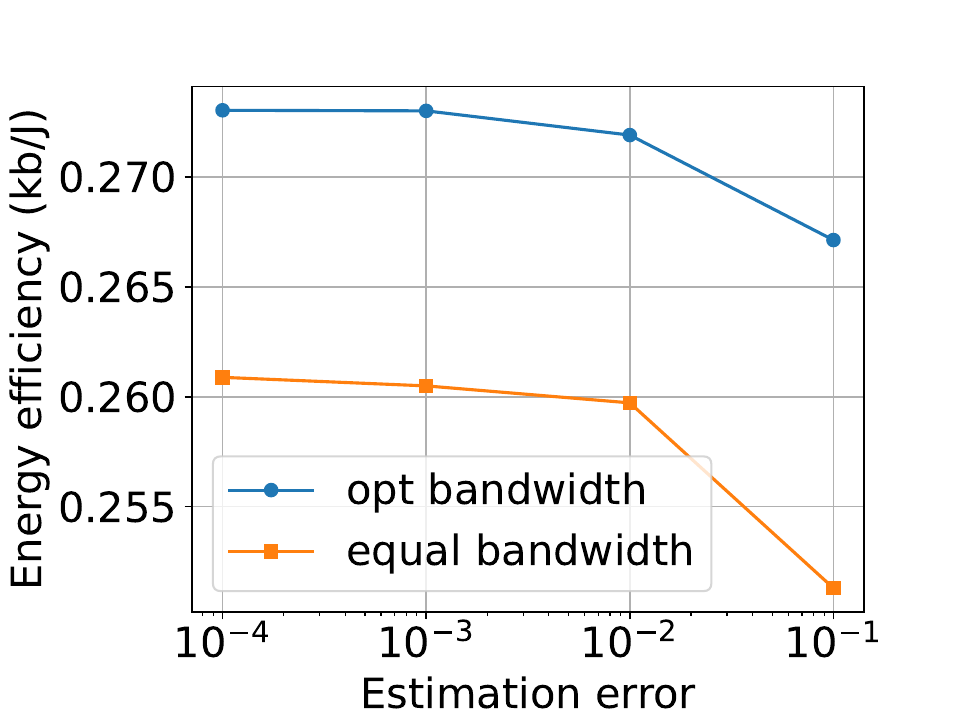}
         \caption{}
     \end{subfigure}
     \caption{Impact of channel estimation error to the performance of AQMIX.}
     \label{fig:bandwidth_opt}
     \vspace{-0.4cm}
\end{figure}

 \begin{figure}[t]
     \centering
     \begin{subfigure}{0.49\columnwidth}
         \centering
         \includegraphics[width=\textwidth]{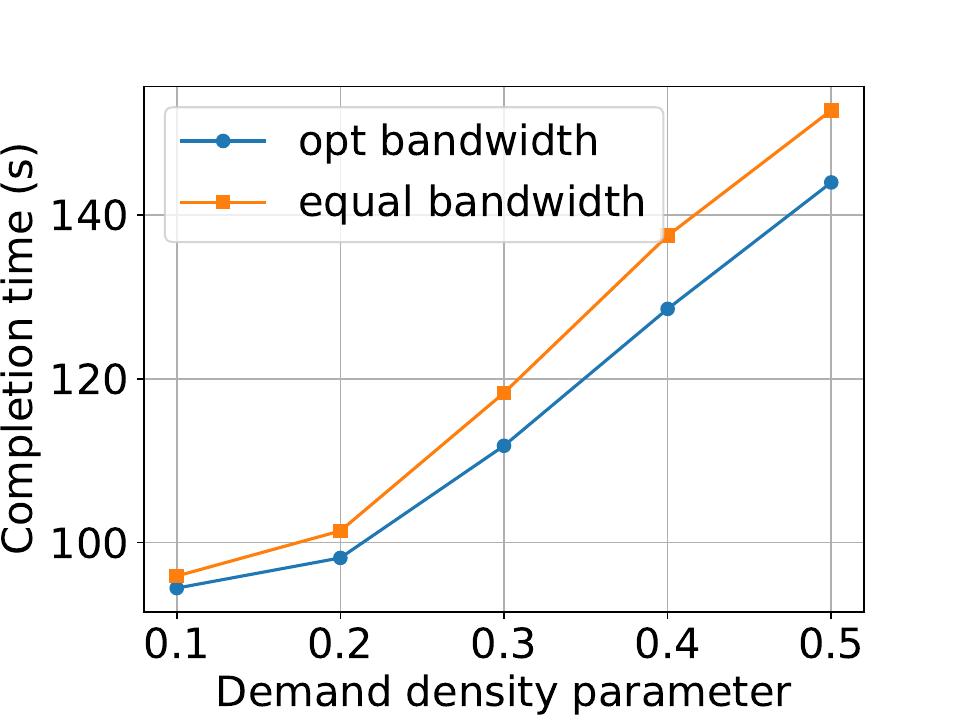}
         \caption{}
     \end{subfigure}
     \hfill
     \begin{subfigure}{0.49\columnwidth}
         \centering
         \includegraphics[width=\textwidth]{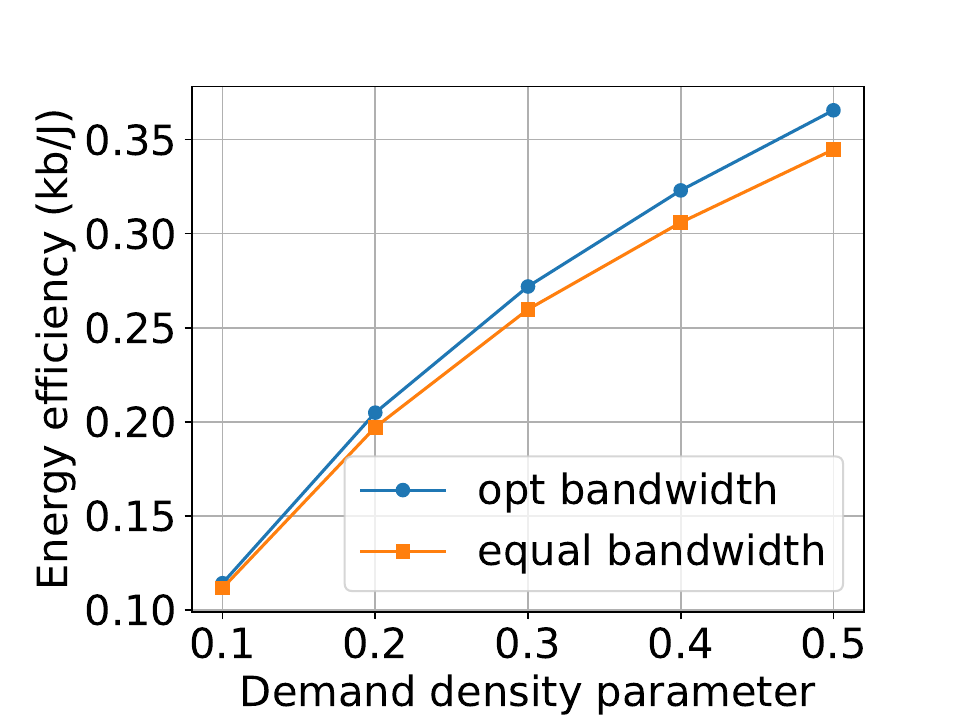}
         \caption{}
     \end{subfigure}
     \caption{Impact of bandwidth optimization to performance of AQMIX at different levels of collection demands, with estimation error of $\sigma_e^2 = 0.01$.}
     \label{fig:bandwidth_opt_density}
     \vspace{-0.4cm}
\end{figure}
Fig.~\ref{fig:bandwidth_opt_density}  compares the proposed bandwidth optimization with the equal allocation for various data density parameters. While the performance gain is marginal in low data densities, it becomes substantial for denser data scenarios. This is because, when data availability is spare, the collection time is relatively short compared to the total operating time, making the gain from reducing the communication time negligible. On the other hand, when there are more data to collect, the impact of bandwidth optimization becomes more pronounced, leading to noticeable performance improvements.

Finally, Fig.~\ref{fig:trajectories} visualizes the trajectories generated by all learned policies for one of the testing scenarios, on the collecting area of 8$\times$8 cells with two UAVs.
This figure once again highlights the superiority of the proposed method, AQMIX, as demonstrated by the cooperation between UAVs, which helps avoid trajectory overlaps between different UAVs, as observed in the other algorithms.

\begin{figure*}[h]
     \centering
     \begin{subfigure}{0.3\textwidth}
         \centering
         \includegraphics[width=\textwidth]{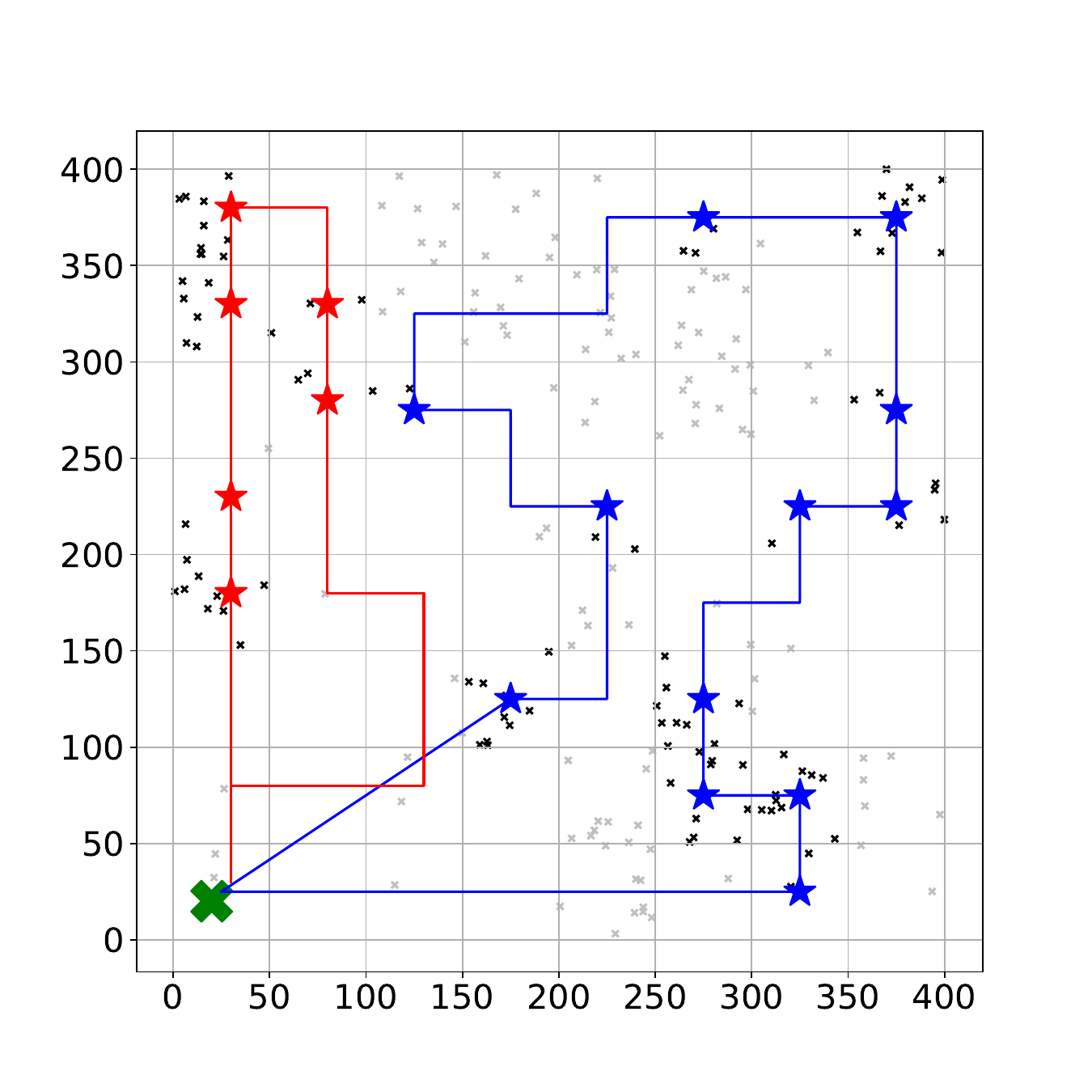}
         \caption{Proposed AQMIX}
     \end{subfigure}
     \hfill
     \begin{subfigure}{0.3\textwidth}
         \centering
         \includegraphics[width=\textwidth]{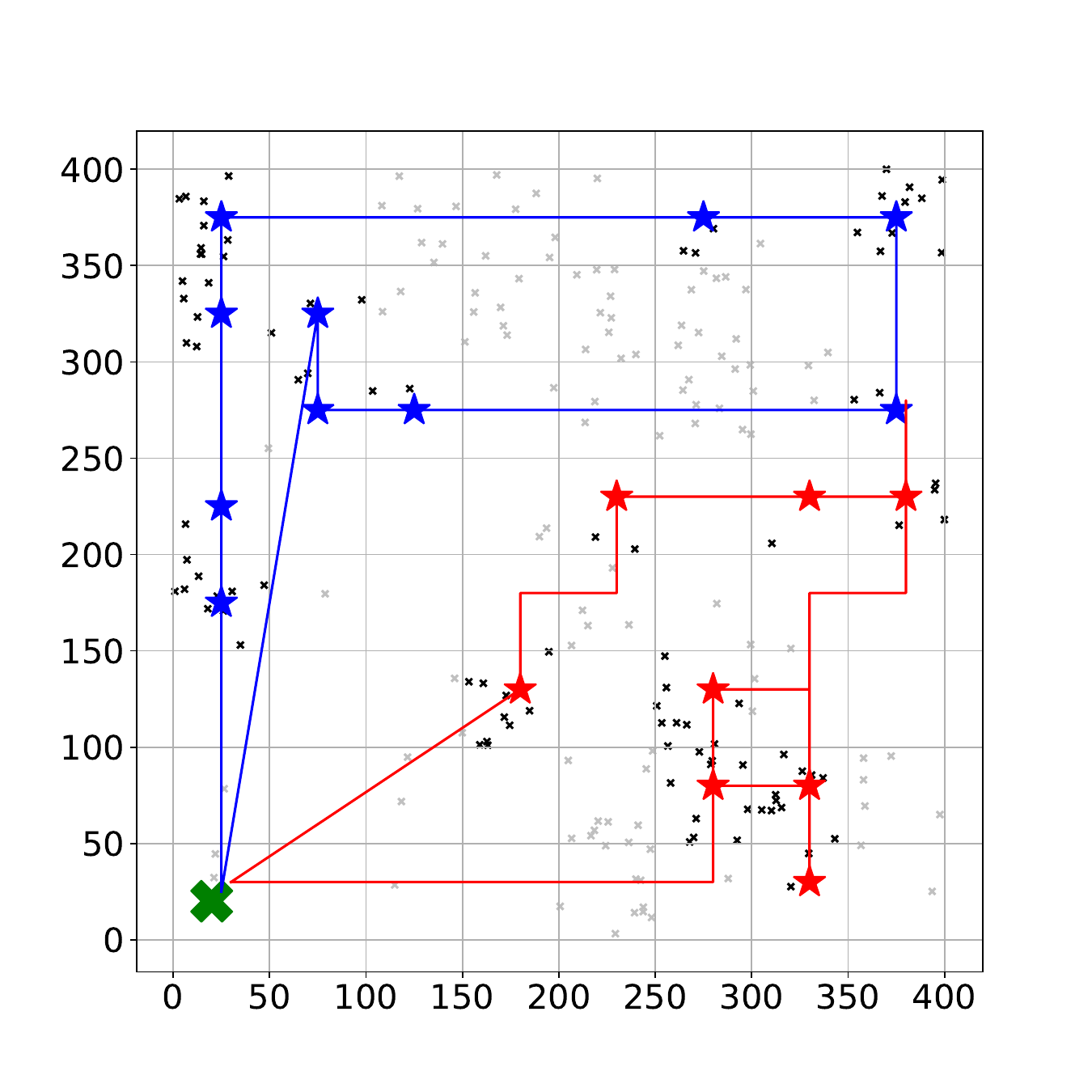}
         \caption{QMIX}
     \end{subfigure}
     \hfill
     \begin{subfigure}{0.3\textwidth}
         \centering
         \includegraphics[width=\textwidth]{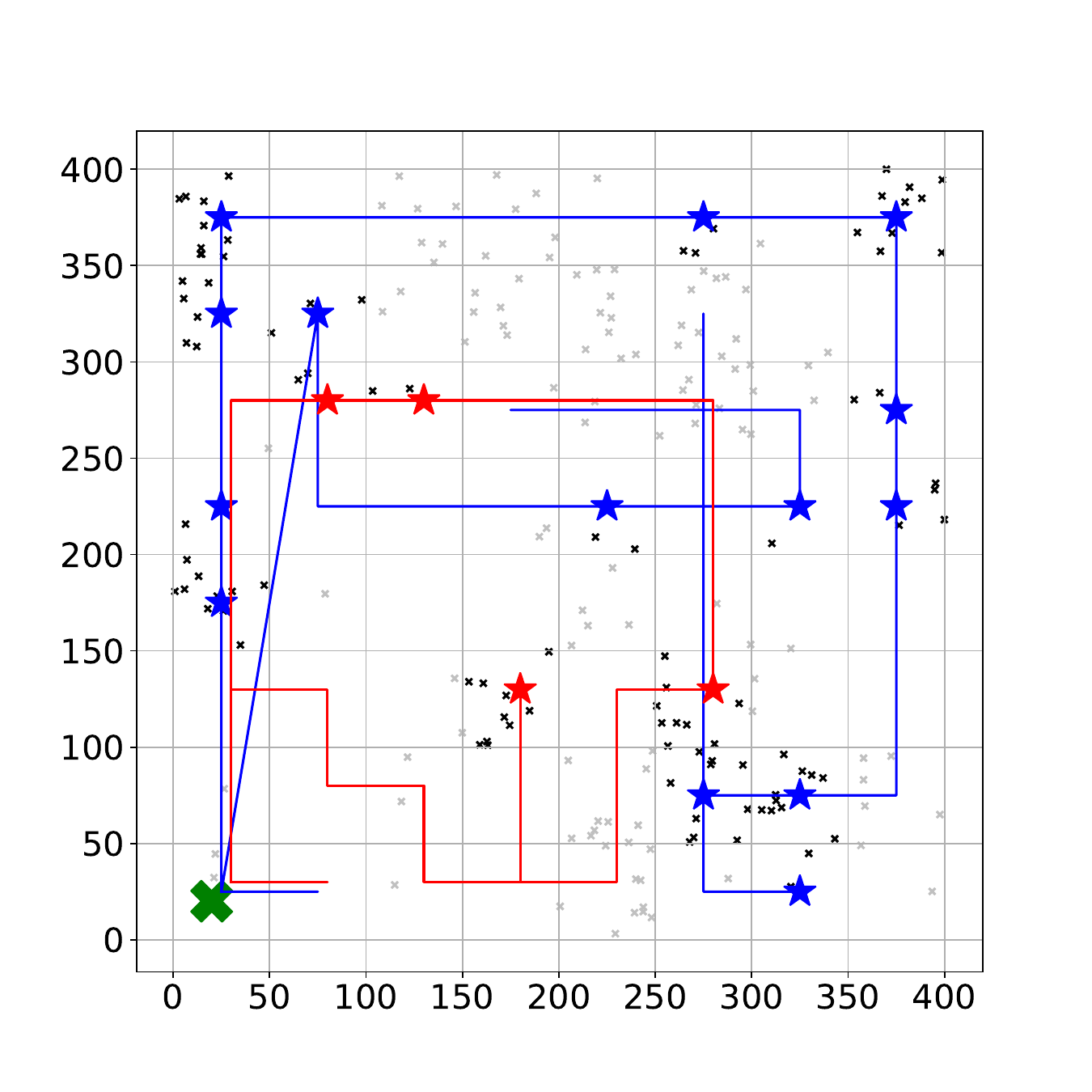}
         \caption{AIQL}
     \end{subfigure}
     \caption{Trajectories generated by learned policies. Green `X' symbols represent UAV's initial/final locations, stars represent hovering locations, small black and gray dots represents SNs containing and non-containing data, respectively.}
     \label{fig:trajectories}
     \vspace{-0.5cm}
\end{figure*}

\balance
\section{Conclusions} 
\label{sec:conclusion}
In this work, we have presented a solution to the problem of cooperative UAV data collection under realistic conditions characterized by asynchronization in the learning environment, primarily driven by stochastic data availability and limited inter-UAV communication. A key challenge addressed in this study is the incomplete information and asynchronous decision-making among UAVs, which are inherent to such scenarios. To tackle these issues, we introduced an asynchronous multi-agent learning framework, AQMIX, which has demonstrated superior performance and robustness compared to existing reference schemes. Moreover, we conducted a thorough sensitivity analysis of the proposed framework, evaluating its performance under varying system parameters such as communication range, energy budget, and data density. Our results highlight the adaptability and robustness of the framework across a wide range of operational conditions.

While the results presented in this paper are promising, there remains a gap between our solution and its practical implementation. One of the key challenges is the design of an efficient UAV-UAV communication protocol, particularly in determining when a UAV should initiate communication with another and how it should respond while simultaneously moving or collecting data.
Another topic is to further enhance the bandwidth utilization efficiency, by allowing the UAVs to operate the full system bandwidth. In this case, inter-UAV interference can be approximated from the channel statistics when computing the UAV hovering time.
Beyond the specific case study examined here, we believe that our novel modeling approach and proposed model-free learning solution can offer valuable insights and be applied to other systems where asynchrony exists.


%
%
%


\begin{thebibliography}{10}

\bibitem{cuong24}
C.~Le, T.~X. Vu, and S.~Chatzinotas, ``Cooperative UAVs with asynchronous learning for remote data collection,'' in {\em IEEE Global Communications Conference}, 2024.

\bibitem{bayerlein2021multi}
H.~Bayerlein, M.~Theile, M.~Caccamo, and D.~Gesbert, ``Multi-UAV path planning for wireless data harvesting with deep reinforcement learning,'' {\em IEEE Open J. Commun. Soc.}, vol.~2, pp.~1171--1187, 2021.

\bibitem{oubbati2022synchronizing}
O.~S. Oubbati, M.~Atiquzzaman, H.~Lim, A.~Rachedi, and A.~Lakas, ``Synchronizing UAV teams for timely data collection and energy transfer by deep reinforcement learning,'' {\em IEEE Trans. Veh. Techno.}, vol.~71, no.~6, pp.~6682--6697, 2022.

\bibitem{li2022learning}
Z.~Li, P.~Tong, J.~Liu, X.~Wang, L.~Xie, and H.~Dai, ``Learning-based data gathering for information freshness in UAV-assisted IoT networks,'' {\em IEEE Internet Things J.}, vol.~10, no.~3, pp.~2557--2573, 2022.

\bibitem{chen2022joint}
G.~Chen, X.~B. Zhai, and C.~Li, ``Joint optimization of trajectory and user association via reinforcement learning for UAV-aided data collection in wireless networks,'' {\em IEEE Trans. Wirel. Commun.}, 2022.

\bibitem{wang2023cooperative}
X.~Wang, M.~Yi, J.~Liu, Y.~Zhang, M.~Wang, and B.~Bai, ``Cooperative data collection with multiple UAVs for information freshness in the internet of things,'' {\em IEEE Trans. Commun.}, 2023.

\bibitem{rashid2020monotonic}
T.~Rashid, M.~Samvelyan, C.~S. De~Witt, G.~Farquhar, J.~Foerster, and S.~Whiteson, ``Monotonic value function factorisation for deep multi-agent reinforcement learning,'' {\em J. Mach. Learn. Res.}, vol.~21, no.~178, pp.~1--51, 2020.

\bibitem{yuan2022joint}
X.~Yuan, Y.~Hu, J.~Zhang, and A.~Schmeink, ``Joint user scheduling and UAV trajectory design on completion time minimization for UAV-aided data collection,'' {\em IEEE Trans. Wirel. Commun.}, 2022.

\bibitem{zhan2017energy}
C.~Zhan, Y.~Zeng, and R.~Zhang, ``Energy-efficient data collection in UAV enabled wireless sensor network,'' {\em IEEE Wirel. Commun. Lett.}, vol.~7, no.~3, pp.~328--331, 2017.

\bibitem{wang2019energy}
Z.~Wang, R.~Liu, Q.~Liu, J.~S. Thompson, and M.~Kadoch, ``Energy-efficient data collection and device positioning in UAV-assisted IoT,'' {\em IEEE Internet Things J.}, vol.~7, no.~2, pp.~1122--1139, 2019.

\bibitem{you20193d}
C.~You and R.~Zhang, ``3d trajectory optimization in rician fading for UAV-enabled data harvesting,'' {\em IEEE Trans. Wirel. Commun.}, vol.~18, no.~6, pp.~3192--3207, 2019.

\bibitem{sun2023max}
C.~Sun, X.~Xiong, Z.~Zhai, W.~Ni, T.~Ohtsuki, and X.~Wang, ``Max-min fair 3d trajectory design and transmission scheduling for solar-powered fixed-wing UAV-assisted data collection,'' {\em IEEE Trans. Wirel. Commun.}, 2023.

\bibitem{samir2019uav}
M.~Samir, S.~Sharafeddine, C.~M. Assi, T.~M. Nguyen, and A.~Ghrayeb, ``UAV trajectory planning for data collection from time-constrained IoT devices,'' {\em IEEE Trans. Wirel. Commun.}, vol.~19, no.~1, pp.~34--46, 2019.

\bibitem{tran2021uav}
D.-H. Tran, V.-D. Nguyen, S.~Chatzinotas, T.~X. Vu, and B.~Ottersten, ``UAV relay-assisted emergency communications in IoT networks: Resource allocation and trajectory optimization,'' {\em IEEE Trans. Wirel. Commun.}, vol.~21, no.~3, pp.~1621--1637, 2021.

\bibitem{feng2021uav}
T.~Feng, L.~Xie, J.~Yao, and J.~Xu, ``UAV-enabled data collection for wireless sensor networks with distributed beamforming,'' {\em IEEE Trans. Wirel. Commun.}, vol.~21, no.~2, pp.~1347--1361, 2021.

\bibitem{du2023time}
P.~Du, F.~Xie, S.~Chen, and X.~Zhang, ``Time-constrained UAV-aided data collection for IoT networks with energy harvesting,'' in {\em IEEE Conference on Computer Communications Workshops}, pp.~1--6, IEEE, 2023.

\bibitem{hu2020aoi}
H.~Hu, K.~Xiong, G.~Qu, Q.~Ni, P.~Fan, and K.~B. Letaief, ``AoI-minimal trajectory planning and data collection in UAV-assisted wireless powered IoT networks,'' {\em IEEE Internet Things J.}, vol.~8, no.~2, pp.~1211--1223, 2020.

\bibitem{zhan2019completion}
C.~Zhan and Y.~Zeng, ``Completion time minimization for multi-UAV-enabled data collection,'' {\em IEEE Trans. Wirel. Commun.}, vol.~18, no.~10, pp.~4859--4872, 2019.

\bibitem{zhan2019aerial}
C.~Zhan and Y.~Zeng, ``Aerial--ground cost tradeoff for multi-UAV-enabled data collection in wireless sensor networks,'' {\em IEEE Trans. Commun.}, vol.~68, no.~3, pp.~1937--1950, 2019.

\bibitem{zhang2021minimizing}
J.~Zhang, Z.~Li, W.~Xu, J.~Peng, W.~Liang, Z.~Xu, X.~Ren, and X.~Jia, ``Minimizing the number of deployed UAVs for delay-bounded data collection of IoT devices,'' in {\em IEEE Conference on Computer Communications}, pp.~1--10, IEEE, 2021.

\bibitem{xu2021minimizing}
W.~Xu, T.~Xiao, J.~Zhang, W.~Liang, Z.~Xu, X.~Liu, X.~Jia, and S.~K. Das, ``Minimizing the deployment cost of UAVs for delay-sensitive data collection in IoT networks,'' {\em IEEE/ACM Trans. Netw. }, vol.~30, no.~2, pp.~812--825, 2021.

\bibitem{arsham25TOCD}
A.~Mostaani, T.~X.~Vu, H.~Habibi, S.~Chatzinotas and B.~Ottersten, ``Task-Oriented Communication Design at Scale,'' {\em IEEE Trans. Commun.}, vol.~73, no.~1, pp.~378--393, Jan.~2025.

\bibitem{ding20203d}
R.~Ding, F.~Gao, and X.~S. Shen, ``3D UAV trajectory design and frequency band allocation for energy-efficient and fair communication: A deep reinforcement learning approach,'' {\em IEEE Trans. Wirel. Commun.}, vol.~19, no.~12, pp.~7796--7809, 2020.

\bibitem{wang2021trajectory}
Y.~Wang, Z.~Gao, J.~Zhang, X.~Cao, D.~Zheng, Y.~Gao, D.~W.~K. Ng, and M.~Di~Renzo, ``Trajectory design for UAV-based internet of things data collection: A deep reinforcement learning approach,'' {\em IEEE Internet Things J.}, vol.~9, no.~5, pp.~3899--3912, 2021.

\bibitem{fan2022ris}
X.~Fan, M.~Liu, Y.~Chen, S.~Sun, Z.~Li, and X.~Guo, ``Ris-assisted UAV for fresh data collection in 3D urban environments: A deep reinforcement learning approach,'' {\em IEEE Trans. Veh. Techno.}, vol.~72, no.~1, pp.~632--647, 2022.

\bibitem{hu2020cooperative}
J.~Hu, H.~Zhang, L.~Song, R.~Schober, and H.~V. Poor, ``Cooperative internet of UAVs: Distributed trajectory design by multi-agent deep reinforcement learning,'' {\em IEEE Trans. Commun.}, vol.~68, no.~11, pp.~6807--6821, 2020.

\bibitem{oubbati2021multi}
O.~S. Oubbati, M.~Atiquzzaman, A.~Lakas, A.~Baz, H.~Alhakami, and W.~Alhakami, ``Multi-UAV-enabled aoi-aware wpcn: A multi-agent reinforcement learning strategy,'' in {\em IEEE Conference on Computer Communications Workshops}, pp.~1--6, 2021.

\bibitem{emami2023aoi}
Y.~Emami, K.~Li, Y.~Niu, and E.~Tovar, ``AoI minimization using multi-agent proximal policy optimization in UAVs-assisted sensor networks,'' in {\em IEEE International Conference on Communications}, pp.~228--233, 2023.

\bibitem{messaoudi2023uav}
K.~Messaoudi, O.~S. Oubbati, A.~Rachedi, and T.~Bendouma, ``UAV-UGV-based system for aoi minimization in IoT networks,'' in {\em IEEE International Conference on Communications}, pp.~4743--4748,  2023.

\bibitem{bernstein2002complexity}
D.~S. Bernstein, R.~Givan, N.~Immerman, and S.~Zilberstein, ``The complexity of decentralized control of Markov decision processes,'' {\em Mathematics of operations research}, vol.~27, no.~4, pp.~819--840, 2002.

\bibitem{omidshafiei2015decentralized}
S.~Omidshafiei, A.-A. Agha-Mohammadi, C.~Amato, and J.~P. How, ``Decentralized control of partially observable Markov decision processes using belief space macro-actions,'' in {\em 2015 IEEE international conference on robotics and automation (ICRA)}, pp.~5962--5969, IEEE, 2015.

\bibitem{wang2021reducing}
J.~Wang and L.~Sun, ``Reducing bus bunching with asynchronous multi-agent reinforcement learning,'' {\em Proc. 3rd International Joint Conference on Artificial Intelligence}, pp.~426--433, 2021.

\bibitem{yu2023asynchronous}
C.~Yu, X.~Yang, J.~Gao, J.~Chen, Y.~Li, J.~Liu, Y.~Xiang, R.~Huang, H.~Yang, Y.~Wu, {\em et~al.}, ``Asynchronous multi-agent reinforcement learning for efficient real-time multi-robot cooperative exploration,'' in {\em Proc. International Conference on Autonomous Agents and Multiagent Systems}, pp.~1107--1115, 2023.

\bibitem{zeng2019energy}
Y.~Zeng, J.~Xu, and R.~Zhang, ``Energy minimization for wireless communication with rotary-wing UAV,'' {\em IEEE Trans. Wirel. Commun.}, vol.~18, no.~4, pp.~2329--2345, 2019.

\bibitem{al2014optimal}
A.~Al-Hourani, S.~Kandeepan, and S.~Lardner, ``Optimal lap altitude for maximum coverage,'' {\em IEEE Wirel. Commun. Lett.}, vol.~3, no.~6, pp.~569--572, 2014.

\bibitem{hausknecht2015deep}
M.~Hausknecht and P.~Stone, ``Deep recurrent q-learning for partially observable mdps,'' in {\em AAAI fall symposium series}, 2015.

\bibitem{cho2014learning}
K.~Cho, B.~Van~Merri{\"e}nboer, C.~Gulcehre, D.~Bahdanau, F.~Bougares, H.~Schwenk, and Y.~Bengio, ``Learning phrase representations using rnn encoder-decoder for statistical machine translation,'' {\em arXiv preprint arXiv:1406.1078}, 2014.

\end{thebibliography}

\end{document}